\numberwithin{equation}{section}
\newcommand{\norm}[1]{\left\lVert#1\right\rVert}
\newenvironment{proof}{\begin{trivlist} \item[]
{\bf Proof.}}{\nolinebreak
\hfill \rule{2mm}{2mm} \end{trivlist}}
\newtheorem{definition}{Definition}[subsection] 
\newtheorem{theorem}[definition]{Theorem}
\newenvironment{theorem*}[1]{{\bf Theorem #1} \begin{itshape}}{\end{itshape}}
\newenvironment{corollary*}[1]{{\bf Corollary #1} \begin{itshape}}{\end{itshape}}
\newenvironment{proposition*}[1]{{\bf Proposition #1} \begin{itshape}}{\end{itshape}}
\providecommand{\keywords}[1]
{
  \small
  \textbf{\textit{Keywords---}} #1
}
\title{High-Dimensional Changepoint Detection via a Geometrically Inspired Mapping \thanks{Grundy is grateful for the support of the Engineering and Physical Sciences Research Council (grant number EP/L015692/1). The authors also acknowledge Royal Mail Group Ltd for financial support, and are grateful to Jeremy Bradley in Royal Mail GBI Data Science for helpful discussions.}}
\date{}
\author[1]{Thomas Grundy}
\author[2]{Rebecca Killick}
\author[3]{Gueorgui Mihaylov}
\affil[1]{STOR-i Centre for Doctoral Training, Lancaster University, UK (t.grundy1@lancaster.ac.uk)}
\affil[2]{Mathematics and Statistics Department, Lancaster University, UK}
\affil[3]{Royal Mail GBI Data Science, UK}
\begin{document}
\maketitle

\begin{abstract}
High-dimensional changepoint analysis is a growing area of research and has applications in a wide range of fields. The aim is to accurately and efficiently detect changepoints in time series data when both the number of time points and dimensions grow large. Existing methods typically aggregate or project the data to a smaller number of dimensions; usually one. We present a high-dimensional changepoint detection method that takes inspiration from geometry to map a high-dimensional time series to two dimensions. We show theoretically and through simulation that if the input series is Gaussian then the mappings preserve the Gaussianity of the data. Applying univariate changepoint detection methods to both mapped series allows the detection of changepoints that correspond to changes in the mean and variance of the original time series. We demonstrate that this approach outperforms the current state-of-the-art multivariate changepoint methods in terms of accuracy of detected changepoints and computational efficiency. We conclude with applications from genetics and finance.
\end{abstract}
\keywords{changepoint, time series, high-dimensional, PELT}
\section{Introduction}\label{sec:Introduction}
Time series data often have abrupt structural changes occurring at certain time points, known as changepoints. To appropriately analyze, model or forecast time series data that contain changes we need to be able to accurately detect where changepoints occur. High-dimensional changepoint analysis aims to accurately and efficiently detect the location of changepoints as both the number of dimensions and time points increase. High-dimensional changepoint analysis is an ever-growing research area and has multiple applications including finance and economics \citep{Modisett2010}; longitudinal studies \citep{Terrera2011} and genetics \citep{Bleakley2011}.

Changepoint analysis in the univariate setting is a well-studied area of research with early work by \cite{Page1954} and overviews can be found in \cite{Eckley2011} and \cite{Brodsky2013}. The multivariate extension has received less attention, see \cite{Truong2018} for a recent review. One major challenge with high-dimensional changepoint analysis is the computational burden of an increasing number of dimensions. To partially reduce this computational burden, a common assumption is that changepoints are assumed to occur in all series simultaneously \citep{Maboudou-Tchao2013}; a sparse set of series \citep{Wang2018}; or a dense set of series \citep{Zhang2010}. Within these settings, a common approach is to first project the time series to a single dimension and then use a univariate changepoint method on the projected time series. For example, \cite{Zhang2010}, \cite{Horvath2012} and \cite{Enikeeva2019} consider an $l_2$-aggregation of the CUSUM statistic while \cite{Jirak2015} considers an $l_\infty$-aggregation that works well for sparse changepoints. A recent advancement was the Inspect method proposed by \cite{Wang2018} who aim to find an optimal projection direction of the CUSUM statistic to maximize a change in mean. 

Current projection methods are generally limited to detecting changes in a single parameter, usually the mean. Therefore, these methods cannot be used in many practical scenarios where multiple features of the time series change. An alternative, nonparametric approach was taken in \textcolor{black}{\cite{Matteson2014}} where $U$-Statistics were used to segment the time series. As this is a non-parametric method it can detect different types of changes in distribution but becomes computationally infeasible as the number of time points increases. The methods above almost exclusively use a Binary Segmentation approach \citep{Scott1974,Vostrikova1981}, or derivations thereof \citep{Fryzlewicz2014}, to detect multiple changepoints. \textcolor{black}{This can lead to poor detection rates as conditional identification of changes can lead to missing or poor placement of changepoints due to factors such as masking. This occurs when a large change is masked by two smaller changes on either side acting in opposite directions; this idea is explained further in \cite{Fryzlewicz2014}.}

A key novelty in this paper is to map a given high-dimensional time series onto two dimensions instead of one. Inspired by a geometric representation of data, we map each high-dimensional time vector to its distance and angle from a fixed pre-defined reference vector based upon the standard scalar product. These mappings show shift and shape changes in the original data corresponding to mean and variance changes. Given the geometric inspiration, we denote the method GeomCP throughout.

In Section \ref{sec:Methodology}, we set up the high-dimensional changepoint problem before defining the geometric mappings used in GeomCP. Also, we discuss an alternative approach to Binary Segmentation that can be applied to the univariate mapped series. An extensive simulation study is performed in Section \ref{sec:Simulation}, which compares GeomCP to competing available multivariate changepoint methods, Inspect \citep{Wang2018} and E-Divisive \textcolor{black}{\citep{Matteson2014}}. Section \ref{sec:Applications} presents two applications from genetics and finance. Section \ref{sec:Conclusion} gives concluding remarks. 

\section{Methodology}\label{sec:Methodology}
In this section, we set up the high-dimensional changepoint problem for our scenario. We define our new method, GeomCP, and discuss how changes in high-dimensional time series manifest themselves in the mapped time series. We then suggest an appropriate univariate changepoint detection method for detecting changes in the mapped time series - although practically others could be used.

Before proceeding, we define some notation used throughout the paper. We define the $\mathds{1}_p$ vector as a $p$-dimensional vector where each entry is 1 and the number of dimensions, $p$, is inferred from context. For a vector, $\bm{y}=(y_1,\ldots,y_p)^T$, we define the $l_q$-norm as $\norm{\bm{y}}_q:=\left(\sum\limits_{j=1}^p|y_j|^q\right)^\frac{1}{q}$ for $q\in[1,\infty)$. We define $\left\langle\cdot,\cdot\right\rangle$ as the standard scalar product such that for vectors $\bm{x}$ and $\bm{y}$ we have $\left\langle\bm{x},\bm{y}\right\rangle=\sum\limits_{j=1}^px_jy_j$. Finally, the terms variables, series and dimensions shall be used interchangeably to indicate the multivariate nature of the problem.

\subsection{Problem Setup}\label{sec:ProblemSetup}
We study the time series model where $\bm{Y_1},\hdots,\bm{Y_n}$ are independent, $p$-dimensional time vectors that follow a multivariate Normal distribution where,
\[\bm{Y_i}\sim N_p(\boldsymbol\mu_i,\boldsymbol\sigma^2_i\boldsymbol I_p),\;\;\;\;\;1\leq i\leq n\;.\]
We assume there are an unknown number of changepoints, $m$, which occur at locations $\tau_{1:m}=\left(\tau_1,\ldots,\tau_m\right)$. These changepoints split the data into $m+1$ segments, indexed $k$, that contain piecewise constant mean and variance vectors, $\boldsymbol\mu_k$ and $\boldsymbol\sigma_k^2$. Note, we assume a diagonal covariance matrix so the covariance matrix can be described by the variance vector and the identity matrix. We define $\tau_0=0$ and $\tau_{m+1}=n$ and assume the changepoints are ordered so, $\tau_0=0<\tau_1<\ldots<\tau_m<\tau_{m+1}=n$.

The following section introduces the geometric intuition and mappings used within GeomCP. These mappings reduce the dimension of the problem to make the problem computationally feasible as $n$ and $p$ grow large. 

\subsection{Geometric Mapping}\label{sec:GeometricMapping}
When analyzing multivariate time series from a geometric viewpoint, we seek to exploit relevant geometric structures defined in the multi-dimensional space. Here we aim to detect changepoints in the mean and variance vectors of multivariate Normal random variables; therefore, we wish to utilize geometric properties that capture these changes.

\textcolor{black}{A change in the mean vector of our data generating process will cause a location shift of the data points in the multi-dimensional space. Consider a distance between each data point and some fixed reference point, if the data points are shifted in the multi-dimensional space then their distance to the reference point would be expected to change. Hence, we can detect when the mean vector of the data generating process changes by observing a change in the distances. For a change in distance not to occur after an underlying mean change, the new mean vector must remain exactly on the same $(p-1)$-sphere (centered in the reference point) that the old mean vector lay on. Given that the computation of the mean vector is a linear operator on the multivariate time series, the requirement to lie on the same sphere (a quadric in $\mathbb{R}^p$) is highly non-generic from a geometric prospective. As a result, these scenarios are rare especially in high-dimensions.}

\textcolor{black}{A change in the covariance of our data generating process will cause a change in the shape of the data points. More specifically in our setup, a change in the variance would cause the shape of the data points to expand or contract. Consider the angle between each data vector and a reference vector, as the shape of the data points expands (contracts) the angles will become more (less) varied. Hence, we can detect changes in the variance of the data generating process by detecting changes in the angles.} 
  
By using distances and angles, we can map a $p$-dimensional time series to two dimensions. To calculate these mappings, we need a pre-specified reference vector to calculate a distance and angle from. Naturally, one may think to use the mean of the data points. However, this requires a rolling window to estimate the mean of data points prior to the point being mapped. Not only does this introduce tuning parameters, such as the size of the rolling window, but will result in spikes in the distance and angle measures at changepoints. To detect changepoints, we would need a threshold for these spikes and calculating such a threshold is a non-trivial task, hence, we seek an alternative. 

\textcolor{black}{We propose setting the reference vector to be a fixed vector, $\bm{y_0}$. We then translate all the points based upon this fixed reference vector,
\begin{equation}\label{eq:PointRelocation}
  y'_{i,j}=y_{i,j}-(\min\limits_iy_{i,j}-y_{0,j})\;,\;i\in[1,\ldots,n]\;,\;j\in[1,\ldots,p]\;.
\end{equation}
This results in a data-driven reference vector. We choose to set $\bm{y_0}=\mathds{1}$ as this bounds the angle measure between 0 and $\pi/4$ meaning we do not get vectors close to the origin facing in opposite directions causing non-standard behavior within a segment. Moreover, having a non-zero element in every entry of $\bm{y_0}$ ensures changes in the individual series will manifest in the angle measure. Note due to the translation in \eqref{eq:PointRelocation}, the choice of $\bm{y_0}$ does not affect the distance measure. Throughout we assume the reference vector is set as $\bm{y_0}=\mathds{1}$.} 

For data points in the same segment, we would expect their distances and angles to the reference vector to have the same distribution. When a mean (variance) change occurs in the data, this leads to a shift (spread) in the data, hence, the distances (angles) will change. Therefore, by detecting changes in the distances and angles, using an appropriate univariate changepoint method, we recover changepoints in the $p$-dimensional series.

We define our distance and angle measures based upon the standard scalar product. To obtain our distance measure, $d_i$, we perform a mapping, $\delta:\mathbb{R}^p\rightarrow\mathbb{R}^1_{>0}$,\textcolor{black}{
\begin{equation}\label{eq:Distance}
  d_i=\delta(\bm{y_i})=\sqrt{\left\langle(\bm{y'_i}-\mathds{1}),(\bm{y'_i}-\mathds{1})\right\rangle}\;,
\end{equation}
}
which is equivalent to $\norm{\bm{y'_i}-\mathds{1}_p}_2$. 

To obtain our angle measure, $a_i$, we perform a mapping $\alpha:\mathbb{R}^p\rightarrow[0,\frac{\pi}{4}]$,\textcolor{black}{
\begin{equation}\label{eq:Angle}
  a_i=\alpha(\bm{y_i})=\cos^{-1}\left(\frac{\langle\bm{y_i}^{\prime},\mathds{1}\rangle}{\sqrt{\langle\bm{y_i}^{\prime},\bm{y'_i}\rangle}\sqrt{\langle\mathds{1},\mathds{1}\rangle}}\right)\;,
\end{equation}
}
which is the principal angle between $\bm{y'_i}$ and $\mathds{1}$.

By using the standard scalar product we are incorporating information from each series in the distance and angle measures. As such, we would expect GeomCP to perform well in scenarios where a dense set of the series change at each changepoint. This idea will be explored further and verified in Section \ref{sec:Simulation}.

\subsection{Analyzing Mapped Time Series}\label{sec:AnalyzingMappedTS}
Understanding the distributional form of the distance and angle mappings will aid in the choice of univariate changepoint methods. Under our problem setup, Theorem \ref{theorem:distance} shows that the distance measure, asymptotically in $p$, follows a Normal distribution.  
\begin{theorem}\label{theorem:distance}
  Suppose we have independent random variables, $Y_i\sim N(\mu_i,\sigma_i^2)$. Let $X=\sqrt{\sum\limits_{i=1}^pY_i^2}$, then as $p\rightarrow\infty$,
  \[
    \frac{X-\sqrt{\sum\limits_{i=1}^p(\mu_i^2+\sigma_i^2)}}{\sqrt{\frac{2\sum\limits_{i=1}^p\left(\mu_i\sigma_i\right)^2+\sum\limits_{i=1}^p\sigma_i^4+2\rho\sqrt{2\sum\limits_{i=1}^p\sum\limits_{j=1}^p\mu_i^2\sigma_i^2\sigma^4_j}}{2\sum\limits_{i=1}^p(\mu_i^2+\sigma_i^2)}}}\xrightarrow{\mathcal{D}}N(0,1)\;,
  \]
  where $\rho$ is an unknown correlation parameter (see proof).
\end{theorem}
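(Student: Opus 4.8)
The plan is to reduce everything to the behaviour of $S := X^2 = \sum_{i=1}^p Y_i^2$, establish a central limit theorem for $S$, and then transfer asymptotic normality to $X = \sqrt{S}$ via the delta method with the map $g(t)=\sqrt{t}$. The natural first move is to write $Y_i = \mu_i + \sigma_i Z_i$ with $Z_i \sim N(0,1)$ independent, which gives the decomposition
\[
  S = \sum_{i=1}^p \mu_i^2 + 2\sum_{i=1}^p \mu_i\sigma_i Z_i + \sum_{i=1}^p \sigma_i^2 Z_i^2 =: \|\boldsymbol\mu\|_2^2 + B_p + C_p.
\]
Here $B_p$ is exactly Gaussian for every $p$ (a linear combination of the $Z_i$), while $C_p$ is a sum of independent scaled $\chi^2_1$ variables, and the constant term contributes only to the centering.

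First I would record the exact moments. A direct computation gives $E[Y_i^2] = \mu_i^2 + \sigma_i^2$ and $\mathrm{Var}(Y_i^2) = 4\mu_i^2\sigma_i^2 + 2\sigma_i^4$, so that $E[S] = \sum_{i=1}^p(\mu_i^2 + \sigma_i^2)$ and $\mathrm{Var}(S) = \sum_{i=1}^p(4\mu_i^2\sigma_i^2 + 2\sigma_i^4)$; this already explains the centering $\sqrt{\sum_i(\mu_i^2+\sigma_i^2)}$ and the denominator $2\sum_i(\mu_i^2+\sigma_i^2)$ appearing in the statement. Since the $Y_i^2$ are independent but not identically distributed, I would next invoke a Lyapunov (or Lindeberg) central limit theorem to conclude $(S - E[S])/\sqrt{\mathrm{Var}(S)} \xrightarrow{\mathcal D} N(0,1)$; verifying that the Lyapunov ratio vanishes is where mild regularity on the sequences $(\mu_i)$ and $(\sigma_i)$ enters, ensuring no single coordinate dominates and that $E[S]\to\infty$. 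To expose the two distinct sources of fluctuation — the mean-driven term $B_p$ and the variance-driven term $C_p$ — I would apply the Cram\'er--Wold device to $(B_p, C_p)$ to establish \emph{joint} asymptotic normality; the limiting correlation between these two components is exactly the parameter $\rho$ recorded in the statement.

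With $S$ asymptotically normal, I would transfer to $X = \sqrt S$ by writing $X = \sqrt{m_p}\,(1 + (S - m_p)/m_p)^{1/2}$ with $m_p = E[S]$ and Taylor-expanding the square root, so that to leading order $X - \sqrt{m_p} \approx (S - m_p)/(2\sqrt{m_p})$. Consequently $X$ inherits asymptotic normality with mean $\sqrt{m_p}$ and variance $\mathrm{Var}(S)/(4 m_p)$. Writing this through the decomposition $S - m_p = B_p + (C_p - E[C_p])$, the two marginal contributions are $\mathrm{Var}(U_p) = \sum_i \mu_i^2\sigma_i^2 / m_p$ and $\mathrm{Var}(V_p) = \sum_i \sigma_i^4 / (2 m_p)$, and their cross term contributes $2\rho\sqrt{\mathrm{Var}(U_p)\mathrm{Var}(V_p)}$. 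Since $\sum_i\sum_j \mu_i^2\sigma_i^2\sigma_j^4 = (\sum_i \mu_i^2\sigma_i^2)(\sum_j \sigma_j^4)$, this reproduces precisely the term $2\rho\sqrt{2\sum_i\sum_j \mu_i^2\sigma_i^2\sigma_j^4}$ in the numerator of the variance. Collecting the centering and scaling then yields the claimed standardised limit.

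The main obstacle I anticipate is making the delta-method step rigorous in this growing-dimension (triangular-array) regime rather than the textbook fixed-distribution setting: one must control the Taylor remainder and show it is $o_p(1)$ on the $O(1)$ scale at which $X - \sqrt{m_p}$ fluctuates, which in turn relies on the same regularity conditions that drive the Lyapunov step and on $m_p \to \infty$. The secondary delicate point is the joint-normality argument that legitimises treating the mean and variance contributions as (possibly correlated) Gaussian components and thereby identifies $\rho$; for strictly independent Gaussian coordinates the cross term in fact vanishes, but the proof is cleanest if $\rho$ is carried symbolically and its value discussed separately.
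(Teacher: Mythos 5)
Your proposal is correct and follows the same basic skeleton that the paper's supplementary proof evidently uses: write $Y_i=\mu_i+\sigma_i Z_i$, split $X^2$ into the exactly Gaussian linear part $B_p=2\sum_i\mu_i\sigma_i Z_i$ and the scaled-$\chi^2$ part $C_p=\sum_i\sigma_i^2 Z_i^2$, apply a CLT, and pass from $X^2$ to $X$ at the scale $2\sqrt{m_p}$ with $m_p=\sum_i(\mu_i^2+\sigma_i^2)$. Where you genuinely depart from the paper is the combination step. The form of the stated variance, $V_1+V_2+2\rho\sqrt{V_1V_2}$ with ``$\rho$ an unknown correlation parameter,'' indicates that the paper treats the two pieces as separately (asymptotically) normal and adds them as correlated normals without establishing joint normality, which is why the cross term $2\rho\sqrt{2\sum_i\sum_j\mu_i^2\sigma_i^2\sigma_j^4}$ is carried with $\rho$ unidentified. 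Your route --- a single Lyapunov CLT for $S=X^2$, or equivalently the Cram\'er--Wold joint CLT for $(B_p,C_p)$, followed by the delta method --- is both more rigorous and strictly stronger: since $\mathrm{Cov}(Z_i,Z_i^2)=E[Z_i^3]=0$, one has $\mathrm{Cov}(B_p,C_p)=2\sum_i\mu_i\sigma_i^3E[Z_i^3]=0$ exactly for every $p$, so the joint limit is bivariate normal with zero correlation and your argument actually pins down $\rho=0$, collapsing the limiting variance to $\left(2\sum_i\mu_i^2\sigma_i^2+\sum_i\sigma_i^4\right)/\left(2\sum_i(\mu_i^2+\sigma_i^2)\right)$; the ``unknown correlation'' in the theorem is an artifact of summing marginal limits without a joint CLT. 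Two caveats you rightly flag and should make explicit in a full write-up: the Lyapunov/Lindeberg condition requires regularity on $(\mu_i,\sigma_i)$ (no dominating coordinate, $m_p\to\infty$) that the theorem statement itself omits, and the delta method in this growing-$p$ triangular-array regime is cleanest via the exact identity $X-\sqrt{m_p}=(S-m_p)/(X+\sqrt{m_p})$ together with $X/\sqrt{m_p}\xrightarrow{\mathcal{P}}1$, which holds once $\mathrm{Var}(S)=o(m_p^2)$, rather than via an informal Taylor expansion.
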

\begin{proof} See the Supplementary Material.\end{proof}

Theorem \ref{theorem:distance} shows that, asymptotically in $p$, the distance between each time vector and a pre-specified fixed vector follows a Normal distribution. Hence, for piecewise constant time vectors, the resulting distance measure will follow a piecewise constant Normal distribution. It is common in the literature to assume that angles also follow a Normal distribution, as in \cite{Fearnhead2018}. We found by simulation, for large enough $p$, the angle measure defined in \eqref{eq:Angle} is well approximated by a Normal distribution with piecewise constant mean and variance. 

Whilst any theoretically valid univariate method could be used to detect changepoints in the mapped series, we use the PELT algorithm of \cite{Killick2012} as this is an exact and computationally efficient search. For $n\rightarrow\infty$, PELT is consistent in detecting the number and location of changes in mean and variance \citep{Tickle2019,Fisch2018}, hence, using Theorem \ref{theorem:distance}, we gain consistency of our distance measure as $p\rightarrow\infty$ also. When the Normal approximation of the distance and angle measures holds, we use the Normal likelihood as our test statistic within PELT and allow for changes in mean and variance. If $p$ is small, we may not want to make the Normal assumption. In this case, we recommend using a non-parametric test statistic, such as the empirical distribution from \cite{Zou2014} (where consistency has also been shown) as embedded within PELT in \cite{Haynes2017}. 

\subsection{GeomCP Algorithm}\label{sec:computation}
Algorithm \ref{alg:GeomCP} details the pseudo-code for GeomCP. As changepoints can manifest in both the distance and angle measure, we post-process the two sets of changepoints to obtain the final set of changes. We introduce a threshold, $\xi$, and say that a changepoint in the distance measure, $\hat{\tau}^{(d)}$, and a changepoint in the angle measure, $\hat{\tau}^{(a)}$, are deemed the same if $\left|\hat{\tau}^{(d)}-\hat{\tau}^{(a)}\right|\leq\xi$. If we determine two changepoints to be the same we set the changepoint location to be the one given by the angle measure as Section \ref{sec:SimHistograms} demonstrates, this results in more accurate changepoint locations. \textcolor{black}{The choice of $\xi$ should be set based upon the minimum distance expected between changepoints. Alternatively, $\xi$ could be set to zero and then an alternative post-processing step would be required to determine whether similar changepoint estimates correspond to the same change.}

\begin{algorithm}
  \caption{GeomCP}
  \label{alg:GeomCP}
  \begin{algorithmic}
    \Require $\bm{Y}\in\mathbb{R}^{n\times p},\;\text{threshold}=\xi,\;\textit{Univariate Cpt Method}.$
    \State \textbf{Step 1:} Centralize data by $y'_{i,j}=y_{i,j}-\left(\min\limits_i y_{i,j}-1\right)$.
    \State \textbf{Step 2:} Perform distance mapping: $\bm{y_i}\xrightarrow{\delta}d_i,\;\forall i$.
    \State \textbf{Step 3:} Perform \textit{Univariate Cpt Method} on $\boldsymbol d$ to recover cpts, $\hat{\boldsymbol\tau}^{(d)}$.
    \State \textbf{Step 4:} Perform angle mapping: $\bm{y_i}\xrightarrow{\alpha}a_i,\;\forall i$.
    \State \textbf{Step 5:} Perform \textit{Univariate Cpt Method} on $\boldsymbol a$ to recover cpts, $\hat{\boldsymbol\tau}^{(a)}$.
    \State \textbf{Step 6:} $\forall k$, \textbf{if} $\min\left|\hat{\boldsymbol\tau}^{(a)}-\hat{\tau}^{(d)}_k\right|<\xi$ \textbf{then} remove $\hat{\tau}^{(d)}_k$ from $\hat{\boldsymbol\tau}^{(d)}$.
    \Ensure $\hat{\boldsymbol\tau}=\text{sort}(\hat{\boldsymbol\tau}^{(a)},\hat{\boldsymbol\tau}^{(d)})$
  \end{algorithmic}
\end{algorithm}
One of the major downfalls of many multivariate changepoint methods is they are computationally infeasible for large $n$ and $p$. Within GeomCP, the computational cost to calculate both the distance and angle measures in \eqref{eq:Distance} and \eqref{eq:Angle} is $\mathcal{O}(np)$. \textcolor{black}{ If we implement the PELT algorithm for our univariate changepoint detection, this has expected computational cost $\mathcal{O}(n)$ under certain conditions. The main condition requires the number of changepoints to increase linearly with the number of time points, further details are given in \cite{Killick2012}. If these conditions are not satisfied, PELT has an at worst computational cost of $\mathcal{O}(n^2)$. Hence, the expected computational cost of GeomCP is $\mathcal{O}(np+n)=\mathcal{O}(np)$ (under the conditions in \cite{Killick2012}) and has at worst computational cost $\mathcal{O}\left(np+n^2)\right)=\mathcal{O}\left(n(p+n)\right)$.}

\textcolor{black}{
\subsection{Non-Normal and Dependent Data}\label{sec:assumptions}
The current problem setup assumes multivariate Normal distributed data with a diagonal covariance matrix. These assumptions are made to facilitate our theoretical analysis and result in the Normality of the mapped series. If these assumptions are broken, the geometric intuition described in Section \ref{sec:GeometricMapping} still holds, but we can say less about the theoretical properties of the mapped series.}

\textcolor{black}{Firstly, if we allow for an arbitrary covariance matrix, this describes the shape and spread of the data points. Suppose our data undergoes a change from $X_{\text{pre}}\sim N(\bm{0},\Sigma)$ to $X_{\text{post}}\sim(\bm{0},\boldsymbol\sigma\Sigma)$ this will cause the data points to spread out in the directions of the principle components. Hence, we would still expect the angles between the time vectors and the reference vector to change, revealing the change in covariance. We investigate this further in Section \ref{sec:Cov}. In fact, a Normal distributed data set with a known covariance matrix could be transformed into a Normal distributed data set with a diagonal covariance matrix (satisfying our initial problem setup) by an orthogonal transformation that aligns the axes with the principle components. Such a transformation would preserve the distances and angles by definition but requires knowledge of the true covariance structure.} 

\textcolor{black}{Alternatively, we could consider other inner products in our distance and angle mappings defined in \eqref{eq:Distance} and \eqref{eq:Angle}; here the geometric motivation of the method would remain valid. In this case, for an underlying mean change to occur without the distance measure changing, the new mean vector must remain exactly on the more general $(p-1)$-quadric in $\mathbb{R}^p$. This is still a highly non-generic requirement from a geometric prospective. In particular, we could use scalar products directly derived from the covariance matrix, such as the Mahalanobis Distance \citep{Mahalanobis1936}. In such cases, the direct relation between angles and the correlation coefficients is well known \citep{Wickens1995}. However, such inner products require an estimate of the covariance in each segment, which is non-trivial and therefore left as future work.} 

\textcolor{black}{If we allow the data to be distributed from a non-Normal distribution then we would expect changes to the first and second moment of these distributions to still manifest in the distance and angle mappings. However, being able to understand the distribution of the mapped series would be more challenging. In practice, the empirical cost function could be used within PELT \citep{Haynes2017} yet this would lead to less power in the detection of changes in the univariate series.}

\textcolor{black}{Finally, if we allowed temporal dependence between the time points this would lead to temporal dependence in the mapped series and an appropriate, cost function for PELT could be used. Understanding how the temporal dependence in the multivariate series manifests in the mapped series is non-trivial and is left as further work.
}

In the next section, we provide an extensive simulation study exploring the effectiveness of GeomCP at detecting multivariate changes in mean and variance and demonstrate an improved detection rate on current state-of-the-art multivariate changepoint methods. Furthermore, we illustrate the improved computational speed of GeomCP over current methods, especially as $n$ and $p$ grow large.

\section{Simulation Study}\label{sec:Simulation}
In this section, we provide a comparison of GeomCP; the Inspect method of \cite{Wang2018}; and the E-Divisive method of \textcolor{black}{\cite{Matteson2014}} using the statistical software R \citep{R}. First, we investigate how changes in mean and variance of time series manifest themselves in the distance and angle measures within GeomCP. We then compare GeomCP to Inspect and E-Divisive in a wide range of scenarios including dense changepoints, where the change occurs in all or a large number of dimensions, and sparse changepoints, where the change occurs in a small number of dimensions. Changes in both mean, variance and a combination of the two will be considered.

Inspect is only designed for detecting changes in mean, therefore, it will only be included in such scenarios. In addition, Inspect is designed for detecting sparse changepoints, however, Inspect `can be applied in non-sparse settings as well' \citep{Wang2018} so we also include it in the dense change in mean scenarios. Like GeomCP, E-Divisive is designed for dense changepoints, but, we will also include it in the sparse changepoint scenarios to assess performance.

For GeomCP we perform the mappings in \eqref{eq:Distance} and \eqref{eq:Angle} before applying the PELT algorithm using the \textit{changepoint} package \citep{R:PELT2014}. Unless otherwise stated, we use the default settings; namely, the MBIC penalty \citep{Zhang2007}, Normal distribution and allow for changes in mean and variance. We implement the Inspect method using the \textit{InspectChangepoint} package \citep{R:Inspect}. The thresholds used to identify significant changepoints are calculated before timing the simulations using the data-driven approach suggested in \cite{Wang2018}. For the remaining user-defined parameters, we use the default settings with $Q=0$. Setting $Q=0$ implements a Binary Segmentation approach \citep{Scott1974,Vostrikova1981} for identifying multiple changepoints. When using $Q=1000$, as suggested in \cite{Wang2018}, a Wild Binary Segmentation \citep{Fryzlewicz2014} approach is implemented to detect multiple changes. However, this becomes computational infeasible even at moderate levels of $n$ and $p$ while only resulting in minor improvements in detection rate at the expense of higher false discovery rates. For $p>1000$, the data-driven calculation of the thresholds was computationally infeasible, hence, the theoretical threshold derived in \cite{Wang2018} was originally implemented. However, this led to an excessive number of false positives and, as such, is not included. For the implementation of the E-Divisive method, we use the \textit{ecp} package \citep{R:ecp} with $\alpha=1$; minimum segment size of 30; a significance level of 0.05; and $R=499$ as suggested by \textcolor{black}{\cite{Matteson2014}}.\\

Unless indicated otherwise, we simulate data from a Normal distribution with changes in mean and variance given in each scenario. Additionally, the number of changepoints is set as $m=\left\lceil\frac{n}{200}\right\rceil$ and we distribute the changepoints \textcolor{black}{uniformly at random} throughout the time series with the condition that they are at least 30 time points apart. Where computationally feasible, we perform 500 repetitions of each scenario and display the true detection rate (TDR) and false detection rate (FDR) along with their confidence intervals given by two standard errors. For scenarios with $n\geq1000$, E-Divisive was only run on 30 replications due to the high computational cost. Changepoint estimates are deemed correct if they are the closest to, and within 10 time points of, the true changepoint and contribute to the TDR. Changepoint estimates more than 10 time points from the true changepoints or where another estimated changepoint is closer to the true changepoint are deemed false and contribute to the FDR. We seek a TDR as close to 1 as possible and an FDR as close to 0 as possible. As GeomCP estimates changepoints in both the distance and angle measures we apply the reconciling method from Section \ref{sec:computation} with the threshold, $\xi=10$. Then we apply the same TDR/FDR method to the reconciled changes.

\subsection{Size of Changepoints}\label{sec:simchangesize}
As we are interested in multivariate changepoints, we need to decide upon the size of a change in each series. If we fixed a specific change size in each series, then as $p$ increases, the change becomes easier to identify due to multivariate power. If we fixed a total change size across all series, then as $p$ increases, the change becomes considerably harder to detect. Hence, we set our simulated change sizes so that GeomCP has an approximately constant performance across $p$, in terms of TDR and FDR.

To achieve a constant performance in the change in mean scenario, we require the difference in the expected distance measure pre- and post-change to be constant across $p$. If we assume unit variance and a set mean across all series before the change, $\tilde{\mu}_{\text{pre}}$ and after the change, $\tilde{\mu}_{\text{post}}$, using Theorem \ref{theorem:distance} the expected difference in the distance measure before and after a changepoint is, 
\[
  \mathbb{E}(d_{\text{post}}-d_{\text{pre}})=\sqrt{p}\left(\sqrt{\tilde{\mu}_{\text{post}}^2+1}-\sqrt{\tilde{\mu}_{\text{pre}}^2+1}\right)\;.
\]
If we set the total mean change size in our simulated data as, 
\begin{equation}
  \sum\limits_{j=1}^p\mu_{j,\text{post}}-\mu_{j,\text{pre}}=\sqrt{p}\Theta\;,\label{eq:meanchangesize}
\end{equation}
for some constant $\Theta$ and, again, assume the mean of each series is the same, we gain,
\begin{align*}
   \Theta&=\sqrt{p}\left(\tilde{\mu}_{\text{post}}-\tilde{\mu}_{\text{pre}}\right)\\
    &\approx\sqrt{p}\left(\sqrt{\tilde{\mu}_{\text{post}}^2+1}-\sqrt{\tilde{\mu}_{\text{pre}}^2+1}\right)\\
    &=\mathbb{E}(d_{\text{post}}-d_{\text{pre}})\;.
\end{align*}
Hence, for a constant $\Theta$, using a total mean change size scaling as in \eqref{eq:meanchangesize} will result in the expected difference of the distance pre- and post-change, and therefore the performance of GeomCP, being approximately constant across $p$. As we re-scale our data before applying our two mappings, the pre- and post-change means will be large enough that this approximation is reasonable. 

Similarly, to gain an approximately constant performance of GeomCP across $p$ for a change in variance, we set the total variance change size in our simulated data as, 
\begin{equation}
  \prod\limits_{j=1}^p\frac{\sigma_{j,\text{post}}}{\sigma_{j,\text{pre}}}=\Phi^{\sqrt{p}}\;,\label{eq:varchangesize}
\end{equation}
for some constant $\Phi$. When comparing methods, we shall use \eqref{eq:meanchangesize} and \eqref{eq:varchangesize} to define the total change size for each scenario, with the change size being the same in all series that undergo a change. 

\subsection{GeomCP Investigation}\label{sec:SimHistograms}
First, we investigate how changes in mean, variance and a combination of the two, manifest themselves in the distance and angle measure within GeomCP. We set $n=1000$ and $p=200$ and simulate data with changepoints $\boldsymbol\tau=(250,500,750)$. At $\tau_1$ we have a mean change of $+0.1$ in all series; at $\tau_2$ we have a variance change of $\times1.2$ in all series; and at $\tau_3$ we have a mean change of $-0.1$ and a variance changes of $\times1.2^{-1}$ in all series. Figure \ref{fig:GeomCPExample} shows 4 of the 200 series and shows the changepoints are undetectable by eye in the individual series. Applying the mappings within GeomCP results in the mapped series seen in Figure \ref{fig:GeomCPExample} where the changes are clearly identifiable in at least one of the distance or angle measure.  

\begin{figure}[h]
  \centering
  \includegraphics[width=0.9\textwidth]{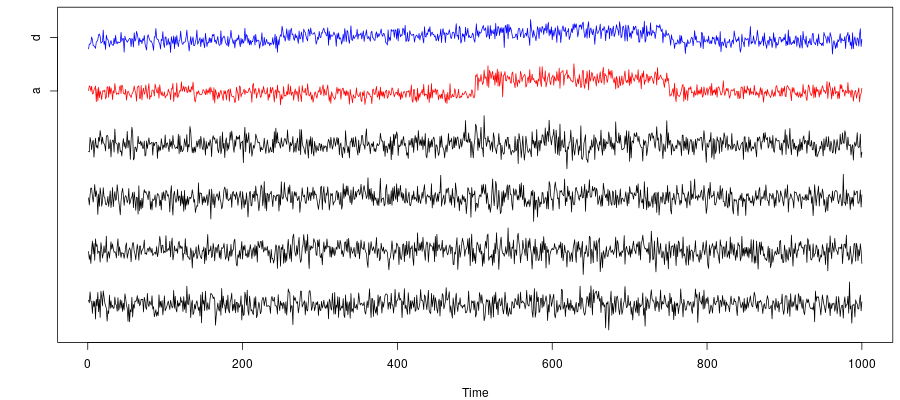}
  \caption{4 series from the simulated data set with the distance (d) and angle (a) mappings showing 3 changepoints that are not obvious in the individual series}
  \label{fig:GeomCPExample}
\end{figure}

Figure \ref{fig:DistCpts} and \ref{fig:AngCpts} shows the position of identified changepoints in the distance and angle measure in 1000 replications of the current scenario using PELT. The relatively small change in mean at time point 250 is only reliably picked up by the distance measure. The change in variance is picked up by the angle measure in almost all cases and is also seen in the distance measure, however, with less accuracy and less often. The change in mean and variance at time point 750 is reliably detected in both the distance and angle measures. These findings were similar for varying mean and variance changes. As such, this justifies setting the location of changepoints that occur in both series to be given by the angle changepoint location as stated in Section \ref{sec:computation}.

\begin{figure}[h]
  \centering
  \subfigure[Distance Changepoints]{
    \includegraphics[width=0.45\linewidth]{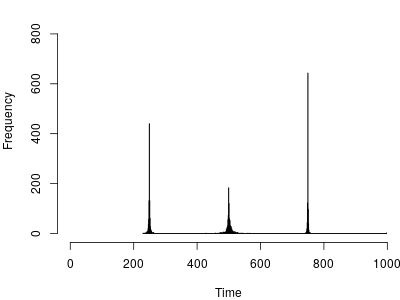}
    \label{fig:DistCpts}
  }
  \subfigure[Angle Changepoints]{
    \includegraphics[width=0.45\linewidth]{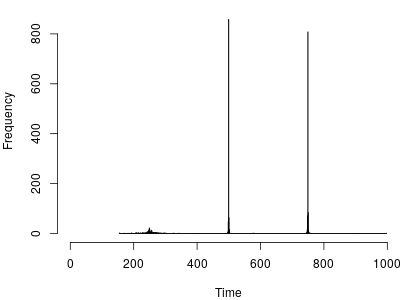}
    \label{fig:AngCpts}
  }
  \caption{Locations of detected changepoints in 1000 repetitions of simulated data set with changepoints at 250, 500 and 750, in mean, variance and, mean and variance, respectively}
\end{figure}

\subsection{Dense Changepoints}\label{sec:SimFullyMVChanges}
Now we compare GeomCP's performance with E-Divisive and Inspect. We investigate dense variance changes here, with mean, and mean and variance changes given in the Supplementary Material.

We simulate data with variance changes that occur in all series for a wide range of $n$ and $p$ and show a subset of the results here. We keep the mean vector constant and we split the total change size defined in \eqref{eq:varchangesize} evenly across all series. We display results with $\Phi=3$ as this is shown to give a high TDR while maintaining a low FDR in \cite{Eckley2011} for $p=1$. Similar findings occur with varying values of $\Phi$; \textcolor{black}{see the Supplementary material.} We apply the GeomCP and E-Divisive methods to these simulated data sets and the TDR and FDR are shown in Figure \ref{fig:VarFM}.

Figure \ref{fig:VarTDR} shows the TDR across different numbers of dimensions and time points. It is clear that GeomCP outperforms E-Divisive in terms of TDR and the gap between the methods widens as the number of dimensions increases. Figure \ref{fig:VarFDR} shows that the improved TDR of GeomCP does not come at the expense of a higher FDR, which has similar rates across $n$ and $p$. 
\begin{figure}[h]
  \centering
  \subfigure[TDR]{
    \includegraphics[width=0.45\textwidth]{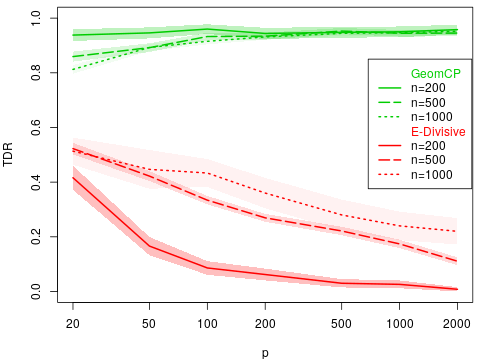}
    \label{fig:VarTDR}
  }
  \subfigure[FDR]{
    \includegraphics[width=0.45\textwidth]{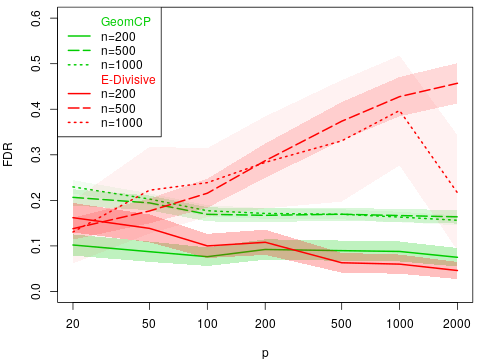}
    \label{fig:VarFDR}
  }
  \caption{TDR and FDR for GeomCP and E-Divisive for simulated data sets containing variance changes that occur in all series for multiple $n$ and $p$}
  \label{fig:VarFM}
\end{figure}

In the mean, and mean and variance change scenarios, GeomCP similarly outperforms both E-Divisive and Inspect in terms of TDR whilst maintaining a low-level FDR across $n$ and $p$. Results can be found in the Supplementary Material. 

\subsection{Sparsity Investigation}\label{sec:SimSparsity}
Thus far we assumed that all series undergo a change at each changepoint. We now explore the effect of the sparsity of the changepoint. We define $\kappa\in(0,1]$ to be the probability that a series undergoes a change. We explore sparse mean changes here, with sparse variance changes included in the Supplementary Material.

For the sparse changepoint scenarios, we set $n=500$, $p=200$ and vary $\kappa$; we note that there were similar findings for different $n$ and $p$. We keep the variance vector constant and the change size in each series that undergoes a change, is the total change size defined in \eqref{eq:meanchangesize}, split between the expected number of series to undergo a change. This means the expected total change size is the same as when all series undergo a change. We display results with $\Theta=1.2$ and similar findings occur with varying values of $\Theta$; \textcolor{black}{see the Supplementary material.} We apply the GeomCP, Inspect and E-Divisive methods to these scenarios and the TDR and FDR are shown in Figure \ref{fig:MeanSparsity}.

Figure \ref{fig:MeanSparsityTDR} shows that GeomCP maintains a constant TDR across $\kappa$ as expected. This reflects the set up of the scenario where the expected total change size is constant across $\kappa$. For dense changepoints, GeomCP compares well as we might expect. Interestingly, E-Divisive also assumes dense changepoints but performs poorly in this scenario. Inspect is designed for sparse changes and as expected, for very sparse changes the method performs the best. For sparse changepoints, the improved performance of Inspect and E-Divisive may be due to the size of change in each affected series increasing as $\kappa$ decreases.
\begin{figure}[h]
  \centering
  \subfigure[TDR]{
    \includegraphics[width=0.45\textwidth]{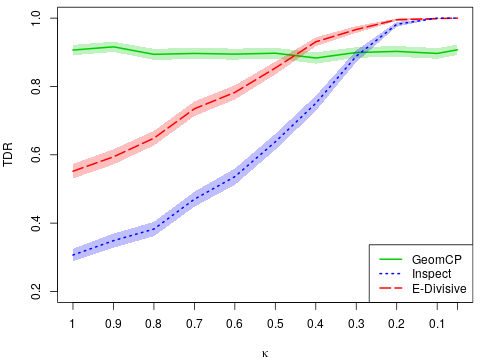}
    \label{fig:MeanSparsityTDR}
  }
  \subfigure[FDR]{
    \includegraphics[width=0.45\textwidth]{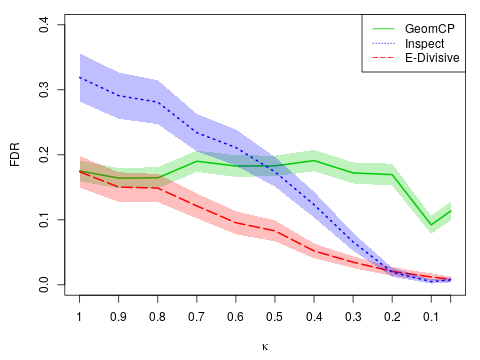}
    \label{fig:MeanSparsityFDR}
  }
  \caption{TDR and FDR for GeomCP, Inspect and E-Divisive for simulated data sets with sparse mean changes for $n=500$ and $p =200$}
  \label{fig:MeanSparsity}
\end{figure}

\textcolor{black}{
\subsection{Between-series Dependence}\label{sec:Cov}
Now we will relax the assumption of a diagonal covariance matrix and investigate how this affects the performance of GeomCP. We will investigate how two different covariance matrix structures compare to the independent, diagonal covariance case. Here we will investigate variance changes in these covariance structures with mean changes explored in the supplementary material.}

\textcolor{black}{For these scenarios, we set $n=200$, $p=100$ and have one changepoint at $\tau=100$. The pre-changepoint data will be distributed from a $N(\bm{0},\Sigma)$ while the post-changepoint data distributed from a $N(\bm{0},\boldsymbol\sigma\Sigma)$. We will vary the change size, $\boldsymbol\sigma$, while each entry of $\boldsymbol\sigma$ will be identical for each change size. We will compare three structures for $\Sigma$:
\begin{enumerate}
\item Independent case: $\Sigma=I$.
\item Block-diagonal case: Here $\Sigma$ will be a block-diagonal matrix with block size of 2. The off-diagonal entries will be randomly sampled from a $U(-0.6,-0.3)\cup U(0.3,0.6)$ distribution with the diagonal entries equal to 1.
\item Random case: Here we let $\Sigma=PDP'$ where $P$ is an orthogonalized matrix of standard Normal random variables and $D$ is a diagonal matrix with entries decreasing from 30 to 1.
\end{enumerate}
As we no longer have independence between series we cannot assume Normality of the distance and angle measures within GeomCP. Hence, we use the empirical cost function \citep{Haynes2017} within PELT to detect changes in the distance and angle measures. We similarly use the empirical cost function in the independent case for comparability.}

\textcolor{black}{Figure \ref{fig:CovTDR} shows the TDR of GeomCP and E-Divisive for varying change sizes, $\sigma$, and the different covariance structures. GeomCP clearly has a greater TDR than E-Divisive for smaller change sizes. However, Figure \ref{fig:CovFDR} shows this comes at the expense of a higher FDR. This is to be expected when using the empirical cost function within PELT as this generally produces a higher FDR. By altering the penalty used within PELT this FDR could be reduced at the cost of some power in detecting changes. Yet for $\sigma\geq 1.3$ GeomCP has a competitive FDR with E-Divisive while having a superior TDR. Interestingly, the covariance structure has very little impact on the performance of GeomCP, this follows our intuition from Section \ref{sec:assumptions}.}
\begin{figure}[h!]
  \centering
  \subfigure[TDR]{
    \includegraphics[width=0.4\textwidth]{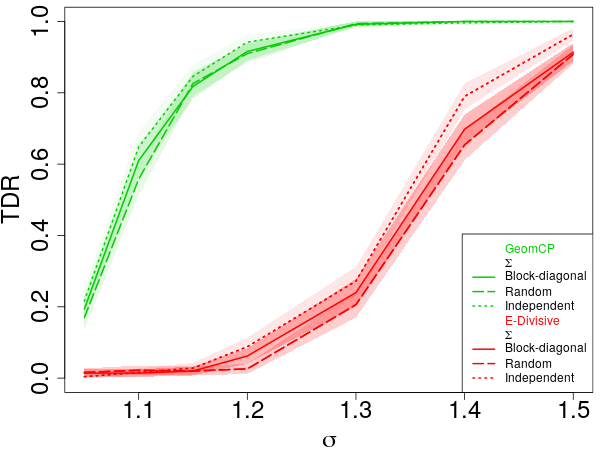}
    \label{fig:CovTDR}
  }
  \subfigure[FDR]{
    \includegraphics[width=0.4\textwidth]{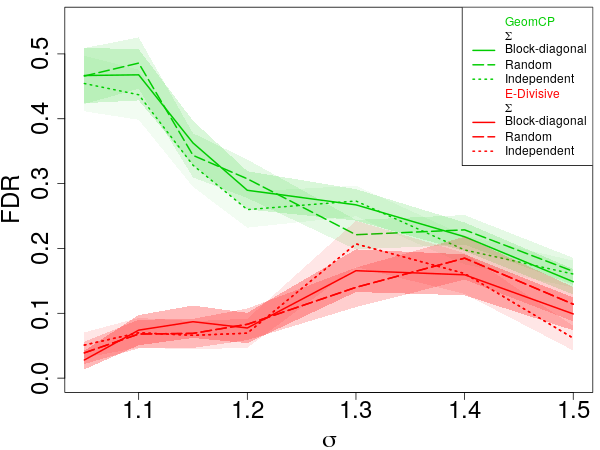}
    \label{fig:CovFDR}
  }
  \caption{TDR and FDR for GeomCP and E-Divisive for simulated data with a change in covariance for $n=200$ and $p=100$}
  \label{fig:cov}
\end{figure}

\subsection{Computational Speed}\label{sec:SimSpeed}
A major issue with high-dimensional changepoint detection is, as $n$ and $p$ grow large, many multivariate changepoint methods become computationally infeasible. Here, we compare the computational speeds of GeomCP, Inspect and E-Divisive for a range of $n$, $p$ and $m$. We will compare the speeds in three scenarios:
\begin{enumerate}
  \item $n$ increasing while $p=200$ and $m=\left\lceil\frac{n}{200}\right\rceil$.
  \item $n$ increasing while $p=200$ and $m=2$.
  \item \textcolor{black}{$p$ increasing while $n=500$ and $m=\left\lceil\frac{n}{200}\right\rceil=3$.}
\end{enumerate}
\textcolor{black}{The second scenario breaks PELT's assumption of a linearly increasing number of changepoints as the number of time points increases.} This means the speed of detecting changepoints using GeomCP will no longer be linear in time. We performed simulations using the three scenarios defined above and only included mean changes so we can compare with Inspect. We set the mean change size to be $\theta_j=0.8$ in all series so that the changes are obvious. For scenario 1 and 2, E-Divisive was computationally infeasible for $n\geq1000$. For scenario 3, Inspect's speed is only shown for $p<1000$ due to the excessive computational time of generating a data-driven threshold. Note that the data-driven thresholds needed for Inspect were calculated outside of the recorded times. In practice, if a threshold was required, then Inspect would take considerably longer to run especially as $p$ increases. Within GeomCP we run the algorithm in serial, performing the mapping and changepoint identification for the distance and then for the angle. These could be processed in parallel, leading to a further reduction in computational time. 

Figure \ref{fig:SimSpeed} shows the computational speed of each method in the three scenarios. We can see from Figure \ref{fig:SimSpeed1} that, in scenario 1, GeomCP is the fastest of the three methods for all $n$. As $n$ increases the difference between the speeds of GeomCP and Inspect increases (note the log scale on both \textcolor{black}{axes}). We can also see, E-Divisive is substantially slower than GeomCP and Inspect for all $n$ and its run time increases rapidly as $n$ gets large. Scenario 1 supports our claim that GeomCP has linear run time in $n$ when the required assumptions of PELT are met.

In scenario 2, shown in Figure \ref{fig:SimSpeed3}, we break the assumption within PELT that the number of changepoints is increasing linearly in time. This results in a comparatively slower performance of GeomCP, although, it remains computationally faster than Inspect for all $n$ shown. Similarly to scenario 1, E-Divisive has a much longer run time than both GeomCP and Inspect. 

Finally, for scenario 3 Figure \ref{fig:SimSpeed2} shows that, for small $p$, Inspect is the fastest of the methods but as $p$ increases above 50 GeomCP is computationally faster. Whilst Inspect is faster for $p<50$, recall that this does not include the time for the calculation of the threshold. Interestingly, the run time of E-Divisive appears unaffected by $p$ until $p\geq1000$. This is likely due to it's computational cost being mainly affected by the number of changepoints and time points, which remain constant. Scenario 3 also supports our claim that GeomCP has linear run time as $p$ increases, note the log scale that distorts the linearity of the plot.  

\begin{figure}[h]
  \centering
  \subfigure[]{\includegraphics[width=0.45\textwidth]{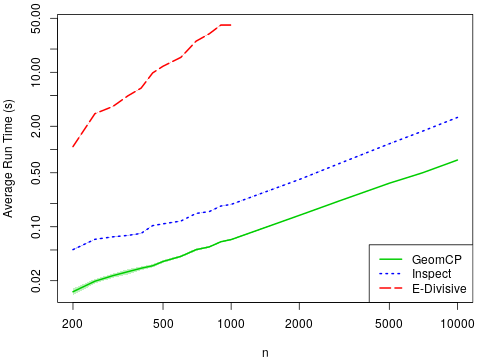}\label{fig:SimSpeed1}}
  \subfigure[]{\includegraphics[width=0.45\textwidth]{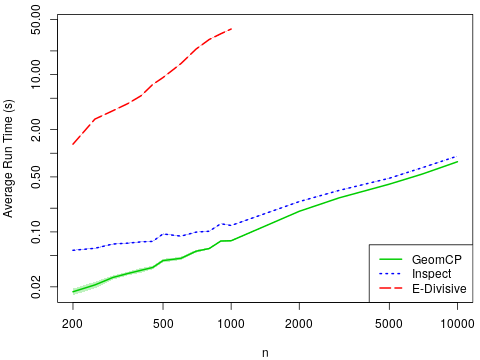}\label{fig:SimSpeed3}}
  \subfigure[]{\includegraphics[width=0.45\textwidth]{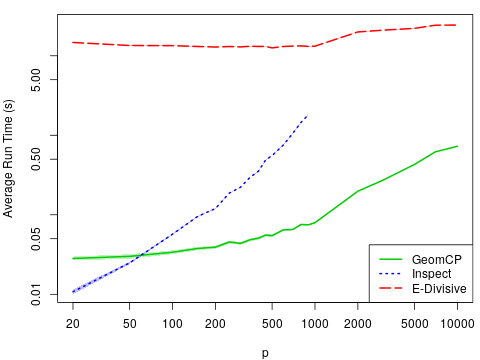}\label{fig:SimSpeed2}}
  \caption{Average run time for each method when: (a) $n$ is increasing, $p=200$ and $m$ is increasing;  (b) $n$ is increasing, $p=200$ and $m=1$; (c) $n=500$, $p$ is increasing and $m=3$ by default}
  \label{fig:SimSpeed}
\end{figure}

\section{Applications}\label{sec:Applications}
\subsection{Comparative Genomic Hybridization}\label{sec:ComparativeGenomicHybridization}
We study the comparative genomic hybridization microarray data set from \cite{Bleakley2011}. Comparative genomic hybridization allows the detection of copy number abnormalities in chromosomes by comparing the fluorescent intensity levels of DNA fragments between a test and reference sample. The data set contains log-intensity-ratio measurements from 43 individuals with bladder tumors with measurements taken at 2215 different positions on the genome. This data set is available in the \textit{ecp} R package \citep{R:ecp}.

Copy number abnormalities come in regions on the genome and can either be specific to the individual or can be shared across several individuals. It is the latter that are of more interest as these are more likely to be disease-related. E-Divisive and Inspect have both been used to segment this data set with their results shown in \textcolor{black}{\cite{Matteson2014}} and \cite{Wang2018} respectively. Under the default settings, these two methods fitted a large number of changepoints, 93 and 254 respectively, which may not be representative of changes occurring across multiple individuals. \cite{Wang2018} suggest selecting the 30 most significant changepoints to counter this, however, the justification for choosing 30 is unknown.   

To perform our analysis we first scale each series, similarly to Inspect, using the median absolute deviation to allow a better comparison. We then use the two mappings within GeomCP and apply the PELT algorithm, using the R package \textit{changepoint.np} \citep{R:EDPELT}, to the resulting mapped series. \textcolor{black}{The mappings do not appear Normal for this application, hence,} we use the empirical cost function and set the number of quantiles as $4\log(n)$, as suggested in \cite{Haynes2017}. We use the CROPS algorithm of \cite{Haynes2017b} to identify an appropriate penalty value with diagnostic plots shown in the Supplementary Material. This lead to 37 changepoints being identified and these are shown in Figure \ref{fig:GeneticsGeomCP} with the signal for 8 individuals from the study and the distance and angle mappings. Approximately $67.5\%$ of the changepoints identified by GeomCP corresponded to those identified by E-Divisive (within 3 time points), with the majority of the rest corresponding to where E-Divisive fitted two changepoints. Also, the changepoints identified seem to be common across multiple individuals while changes specific to a series are not detected. It is promising that our proposed segmentation identifies similar changepoints as other methods, while only identifying those that seem common across multiple individuals. Other GeomCP segmentations, using different potential penalty values identified in CROPS, resulted in more or less of the individual features from specific series being detected.
\begin{figure}[h]
  \centering
  \includegraphics[width=0.9\textwidth]{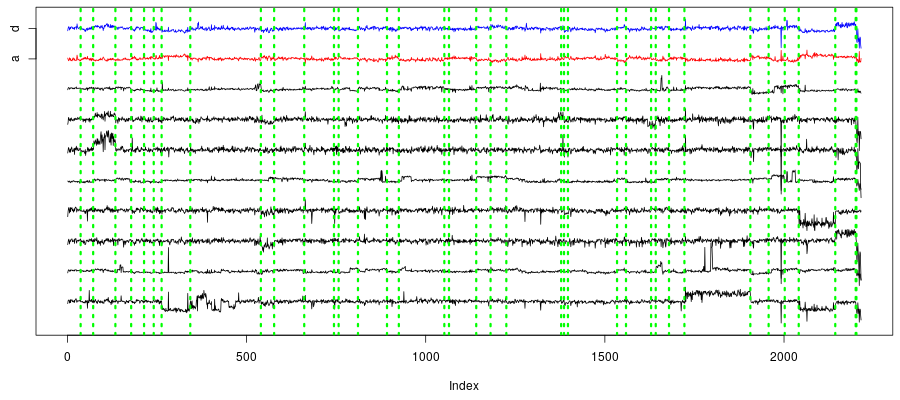}
  \caption{Log-intensity-ratio measurements of microarray data from 6 out of 43 individuals and distance (d) and angle (a) mappings with vertical lines showing the identified changepoints}
  \label{fig:GeneticsGeomCP}
\end{figure}
\subsection{S\&P500 Stock Prices}
We now investigate the daily log-returns of the closing stock prices for 447 companies included in the S\&P500 from January 2015 through to December 2016. This data set was created by \cite{Kaggle:SP500} and was loaded using the R package \textit{SP500R} \citep{R:SP500}. The aim is to identify changes in log-returns that affect a large number of companies rather than changes that are specific to individual companies. First we scale each series using the median absolute deviation. Next we apply the mappings within GeomCP, before using the PELT algorithm from the \textit{changepoint} package \citep{R:PELT2016} to both mapped series using the Normal cost function. We used the CROPS algorithm of \cite{Haynes2017b} to identify an appropriate penalty value for both series with diagnostic plots shown in the Supplementary Material. 

Using GeomCP, we identified 10 changepoints. These are shown in Figure \ref{fig:SP500} along with the log-returns of the first 10 companies from the S\&P500 list and the mapped distance and angle measures. These changepoints correspond to key events that we would expect to impact the stocks of a large number of companies. For example, the changepoints in August 2015 correspond to large falls in the Chinese stock markets with the Dow Jones industrial average falling by 1300 points over 3 days. The changepoints in February and late June 2016 likely correspond to the announcement and subsequent result of the British referendum to leave the European Union. Applying the E-Divisive method, (with the minimum segment length set to 2 and the rest of the user defined parameters set as in Section \ref{sec:Simulation}) resulted in only 2 changepoints, both occurring in August 2015 similar to those detected in GeomCP.  

\begin{figure}[h]
  \centering
  \includegraphics[width=0.9\textwidth]{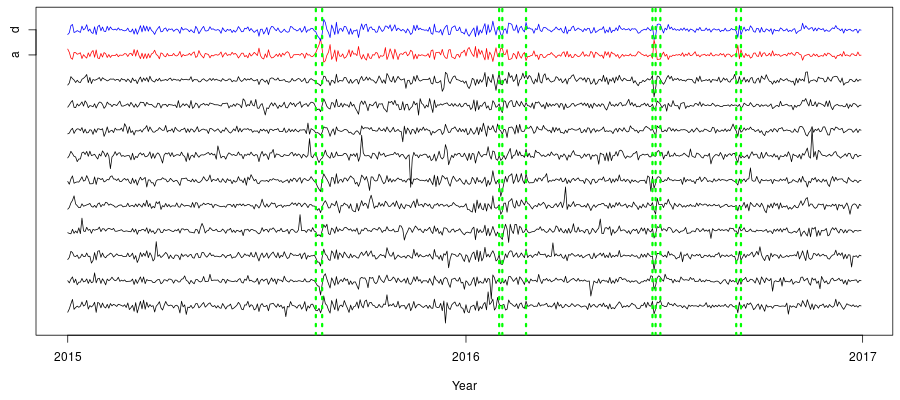}
  \caption{Log-returns of 10 out of 447 companies within the S\&P500 and the distance (d) and angle (a) mappings with vertical lines showing the identified changepoints}
  \label{fig:SP500}
\end{figure}
\section{Conclusion}\label{sec:Conclusion}
We have presented a new high-dimensional changepoint detection method that can detect mean and variance changes in multivariate time series. This is achieved by implementing a univariate changepoint detection method on two related geometric mappings of the time series. We have shown that looking at the high-dimensional changepoint problem from a geometric viewpoint allows us to utilize relevant geometric structures to detect changepoints. We have displayed an improved performance in detecting and identifying the location of multiple changepoints over current state-of-the-art methods. Furthermore, we have shown an improved computational speed over competing methods when using the univariate changepoint method PELT. Finally, we have shown the effectiveness of GeomCP at detecting changepoints when applied to applications.

\textcolor{black}{We have discussed how to extend this methodology to non-Gaussian data along with temporal and between-series dependence. However, a thorough investigation of how changes manifest in the distance and angle measure in the presence of these structures is left as future work.}

\bibliographystyle{apalike}      
\bibliography{reference}   

\begin{thebibliography}{}

\bibitem[{Bleakley} and {Vert}, 2011]{Bleakley2011}
{Bleakley}, K. and {Vert}, J.-P. (2011).
\newblock {{The group fused Lasso for multiple change-point detection}}.
\newblock {\em {arXiv e-prints}}, page arXiv:1106.4199.

\bibitem[Brodsky and Darkhovsky, 2013]{Brodsky2013}
Brodsky, E. and Darkhovsky, B. (2013).
\newblock {\em {Nonparametric Methods in Change Point Problems}}.
\newblock Mathematics and Its Applications. Springer Netherlands.

\bibitem[Eckley et~al., 2011]{Eckley2011}
Eckley, I.~A., Fearnhead, P., and Killick, R. (2011).
\newblock {Analysis of changepoint models}.
\newblock In Barber, D., Cemgil, A.~T., and Chiappa, S., editors, {\em
  {Bayesian Time Series Models}}, chapter~10, page 205–224. Cambridge
  University Press.

\bibitem[Enikeeva and Harchaoui, 2019]{Enikeeva2019}
Enikeeva, F. and Harchaoui, Z. (2019).
\newblock {High-dimensional change-point detection under sparse alternatives}.
\newblock {\em The Annals of Statistics}, (4):2051--2079.

\bibitem[Fearnhead et~al., 2018]{Fearnhead2018}
Fearnhead, P., Maidstone, R., and Letchford, A. (2018).
\newblock {Detecting Changes in Slope With an $l_0$ Penalty}.
\newblock {\em Journal of Computational and Graphical Statistics}, pages 1--11.

\bibitem[{Fisch} et~al., 2018]{Fisch2018}
{Fisch}, A. T.~M., {Eckley}, I.~A., and {Fearnhead}, P. (2018).
\newblock {{A linear time method for the detection of point and collective
  anomalies}}.
\newblock {\em arXiv e-prints}, page arXiv:1806.01947.

\bibitem[Foret, 2019]{R:SP500}
Foret, P. (2019).
\newblock {\em {SP500R: Easy loading of SP500 stocks data}}.
\newblock {Github R} package version 0.1.0.

\bibitem[Fryzlewicz, 2014]{Fryzlewicz2014}
Fryzlewicz, P. (2014).
\newblock {Wild binary segmentation for multiple change-point detection}.
\newblock {\em Annals of Statistics}, 42(6).

\bibitem[Haynes et~al., 2017a]{Haynes2017b}
Haynes, K., Eckley, I.~A., and Fearnhead, P. (2017a).
\newblock {Computationally Efficient Changepoint Detection for a Range of
  Penalties}.
\newblock {\em Journal of Computational and Graphical Statistics},
  26(1):134--143.

\bibitem[Haynes et~al., 2017b]{Haynes2017}
Haynes, K., Fearnhead, P., and Eckley, I. (2017b).
\newblock {A computationally efficient nonparametric approach for changepoint
  detection}.
\newblock {\em Statistics and Computing}, 27(5):1293--1305.

\bibitem[{Haynes} and {Killick}, 2016]{R:EDPELT}
{Haynes}, K. and {Killick}, R. (2016).
\newblock {\em {changepoint.np: Methods for Nonparametric Changepoint
  Detection}}.
\newblock {CRAN R} package version 1.0.1.

\bibitem[Horváth and Hušková, 2012]{Horvath2012}
Horváth, L. and Hušková, M. (2012).
\newblock {Change‐point detection in panel data}.
\newblock {\em Journal of Time Series Analysis}, 33(4):631--648.

\bibitem[James and Matteson, 2014]{R:ecp}
James, N.~A. and Matteson, D.~S. (2014).
\newblock {ecp: An R package for Nonparametric Multiple Change Point Analysis
  of Multivariate Data}.
\newblock {\em Journal of Statistical Software}, 62(7):1--25.

\bibitem[Jirak, 2015]{Jirak2015}
Jirak, M. (2015).
\newblock {Uniform change point tests in high dimension}.
\newblock {\em Annals of Statistics}, 43(6).

\bibitem[Killick and Eckley, 2014]{R:PELT2014}
Killick, R. and Eckley, I.~A. (2014).
\newblock {changepoint: An R Package for Changepoint Analysis}.
\newblock {\em Journal of Statistical Software}, 58(3):1--19.

\bibitem[Killick et~al., 2012]{Killick2012}
Killick, R., Fearnhead, P., and Eckley, I.~A. (2012).
\newblock {Optimal detection of changepoints with a linear computational cost}.
\newblock {\em J. Am. Stat. Assoc.}, 107(500):1590--1598.

\bibitem[Killick et~al., 2016]{R:PELT2016}
Killick, R., Haynes, K., and Eckley, I.~A. (2016).
\newblock {\em {changepoint: An R package for changepoint analysis}}.
\newblock {CRAN R} package version 2.2.2.

\bibitem[Maboudou-Tchao and Hawkins, 2013]{Maboudou-Tchao2013}
Maboudou-Tchao, E.~M. and Hawkins, D.~M. (2013).
\newblock {Detection of multiple change-points in multivariate data}.
\newblock {\em Journal of Applied Statistics}, 40(9):1979--1995.

\bibitem[Mahalanobis, 1936]{Mahalanobis1936}
Mahalanobis, P.~C. (1936).
\newblock {On the generalized distance in statistics}.
\newblock National Institute of Science of India.

\bibitem[Matteson and James, 2014]{Matteson2014}
Matteson, D.~S. and James, N.~A. (2014).
\newblock {A nonparametric approach for multiple change point analysis of
  multivariate data}.
\newblock {\em J. Am. Stat. Assoc.}, 109(505):334--345.

\bibitem[Modisett and Maboudou-Tchao, 2010]{Modisett2010}
Modisett, M.~C. and Maboudou-Tchao, E.~M. (2010).
\newblock {Significantly lower estimates of volatility arise from the use of
  open-high-low-close price data}.
\newblock {\em N. Am. Actuar. J.}, 14(1):68--85.

\bibitem[Nugent, 2018]{Kaggle:SP500}
Nugent, C. (2018).
\newblock {\em {S\&P 500 stock data}}.
\newblock {Kaggle} dataset version 4.

\bibitem[Page, 1954]{Page1954}
Page, E.~S. (1954).
\newblock {Continuous Inspection Schemes}.
\newblock {\em Biometrika}, 41(1/2):100--115.

\bibitem[{R Core Team}, 2019]{R}
{R Core Team} (2019).
\newblock {\em {R: a Language and Enviroment for Statistical Computing}}.
\newblock R Foundation for Statistical Computing, Vienna, Austria.

\bibitem[Scott and Knott, 1974]{Scott1974}
Scott, A.~J. and Knott, M. (1974).
\newblock {A cluster analysis method for grouping means in the analysis of
  variance}.
\newblock {\em Biometrics}, 30(3):507--512.

\bibitem[Terrera et~al., 2011]{Terrera2011}
Terrera, G.~M., van~den Hout, A., and Matthews, F.~E. (2011).
\newblock {Random change point models: investigating cognitive decline in the
  presence of missing data}.
\newblock {\em Journal of Applied Statistics}, 38(4):705--716.

\bibitem[Tickle et~al., 2019]{Tickle2019}
Tickle, S.~O., Eckley, I.~A., Fearnhead, P., and Haynes, K. (2019).
\newblock {Parallelization of a Common Changepoint Detection Method}.
\newblock {\em Journal of Computational and Graphical Statistics}, 0(0):1--13.

\bibitem[Truong et~al., 2018]{Truong2018}
Truong, C., Oudre, L., and Vayatis, N. (2018).
\newblock {Selective Review of offline changepoint detection methods}.
\newblock {\em {arXiv e-prints}}, page abs/1801.00718.

\bibitem[Vostrikova, 1981]{Vostrikova1981}
Vostrikova, L. (1981).
\newblock {Detecting `disorder' in multidimensional random processes}.
\newblock {\em Soviet Math. Dokl.}, 24:55--59.

\bibitem[Wang and Samworth, 2016]{R:Inspect}
Wang, T. and Samworth, R. (2016).
\newblock {\em {InspectChangepoint: High-Dimensional Changepoint Estimation via
  Sparse Projection}}.
\newblock {CRAN R} package version 1.0.1.

\bibitem[Wang and Samworth, 2018]{Wang2018}
Wang, T. and Samworth, R.~J. (2018).
\newblock {High dimensional change point estimation via sparse projection}.
\newblock {\em J. R. Stat. Soc. Series B}, 80(1):57--83.

\bibitem[Wickens, 1995]{Wickens1995}
Wickens, T. (1995).
\newblock {\em The geometry of multivariate statistics}.
\newblock Lawrence Erlbaum Associates, Incorporated.

\bibitem[Zhang and Siegmund, 2007]{Zhang2007}
Zhang, N.~R. and Siegmund, D.~O. (2007).
\newblock {A Modified Bayes Information Criterion with Applications to the
  Analysis of Comparative Genomic Hybridization Data}.
\newblock {\em Biometrics}, 63(1):22--32.

\bibitem[Zhang et~al., 2010]{Zhang2010}
Zhang, N.~R., Siegmund, D.~O., Ji, H., and Li, J.~Z. (2010).
\newblock {Detecting simultaneous changepoints in multiple sequences}.
\newblock {\em Biometrika}, 97(3):631--645.

\bibitem[Zou et~al., 2014]{Zou2014}
Zou, C., Yin, G., Feng, L., and Wang, Z. (2014).
\newblock Nonparametric maximum likelihood approach to multiple change-point
  problems.
\newblock {\em The Annals of Statistics}, 42(3):970--1002.

\end{thebibliography}


\begin{thebibliography}{}

\bibitem[Haynes et~al., 2017]{Haynes2017}
Haynes, K., Fearnhead, P., and Eckley, I. (2017).
\newblock {A computationally efficient nonparametric approach for changepoint
  detection}.
\newblock {\em Statistics and Computing}, 27(5):1293--1305.

\bibitem[{Lyapunov}, 1901]{Lyapunov1901}
{Lyapunov}, A.~M. (1901).
\newblock {{Une proposition g\'en\'erale du calcul des probabilit\'es}}.
\newblock {\em Comptes rendus hebdomadaires de l'Acad\'emie des Sciences de
  Paris}.

\end{thebibliography}

\end{document}


\maketitle

\section{Preliminary Lemmas}
\begin{lemma}\label{lemma:LyapunovCondition}
  Suppose we have independent random variables, $Y_i\sim\Gamma(\alpha,\beta_i)$, with a common shape parameter, $\alpha>0\;\forall i$, and varying scale parameters, $\beta_i>0\;\forall i$. Let $\mathbb{E}(Y_i)=\mu_i$, $\text{Var}(Y_i)=\sigma_i^2$ and $s_n^2=\sum\limits_{i=1}^n\sigma_i^2$. Then, $\exists\delta>0$ such that,
    \[
      \lim\limits_{n\rightarrow\infty}\frac{1}{s_n^{2+\delta}}\sum\limits_{i=1}^n\mathbb{E}\left[|Y_i-\mu_i|^{2+\delta}\right]=0\;.
    \]
\end{lemma}
\begin{proof}
  Consider the moment generating function of $Y_i-\mu_i$,
  \begin{align*}
    M_{Y_i-\mu_i}(t)&=\mathbb{E}\left[e^{t(Y_i-\mu_i)}\right]\\
    &=e^{-t\mu_i}\mathbb{E}\left[e^{tY_i}\right]\\
    &=\frac{e^{-t\mu_i}}{(1-t\beta_i)^\alpha}\;.
\end{align*}
By considering the 4th derivative of $M_{Y_i-\mu_i}(t)$ we gain,
\begin{align*}
  \sum\limits_{i=1}^n \mathbb{E}\left[|Y_i-\mu_i|^4\right]=&\sum\limits_{i=1}^nM^{(\textrm{IV})}_{Y_i-\mu_i}(0)\\
  =&\sum\limits_{i=1}^n(\alpha+3)(\alpha+2)(\alpha+1)\alpha\beta_i^4-4(\alpha+2)(\alpha+1)\alpha\beta_i^3\mu_i\\
  &+6(\alpha+1)\alpha\beta_i^2\mu_i^2-4\alpha\beta_i\mu_i^3+\mu_i^4\\\
  =&c_1\sum\limits_{i=1}^n(\beta_i^4)-c_2\sum\limits_{i=1}^n(\beta_i^3\mu_i)+c_3\sum\limits_{i=1}^n(\beta_i^2\mu_i^2)-c_4\sum\limits_{i=1}^n(\mu_i^3\beta_i)+\sum\limits_{i=1}^n\mu_i^4\\
  \leq&c_1n\beta_{\max}^4-c_2n\beta_{\min}^3\mu_{\min}+c_3n\beta_{\max}^2\mu_{\max}^2-c_4n\beta_{\min}\mu_{\min}^3+n\mu_{\max}^4\\
  =&nc_5\;,
\end{align*}
where $c_1,c_2,c_3,c_4\in\mathbb{R^+}$ and $c_5\in\mathbb{R}$.\\
Now,
\begin{align*}
  s_n^4&=\left(\sum\limits_{i=1}^n\sigma_i^2\right)^2\\
  &\geq n^2\sigma_{\min}^4\;.
\end{align*}
Hence, for $\delta=2$,
\[
  0\leq\lim\limits_{n\rightarrow\infty}\frac{1}{s_n^{2+\delta}}\sum\limits_{i=1}^n\mathbb{E}\left[|Y_i-\mu_i|^{2+\delta}\right]\leq\lim\limits_{n\rightarrow\infty}\frac{nc_5}{n^2\alpha^2\beta_{\min}^4}=0\;,
\]
as required.
\end{proof}
\begin{lemma}\label{lemma:LyapunovCLT}
  Suppose we have independent random variables, $Y_i\sim N(0,\sigma_i^2)$, and $X=\sum\limits_{i=1}^nY_i^2$. Then, as $n\rightarrow\infty$,
  \[
    \frac{X-\sum\limits_{i=1}^n\sigma_i^2}{\sqrt{2\sum\limits_{i=1}^n\sigma_i^4}}\xrightarrow{\mathcal{D}}N(0,1)\;.
  \]
\end{lemma}
\begin{proof}
  Let $Y_i=\sigma_iZ_i$, where $Z_i$ are independent standard normal random variables. Then,
  \[
    \left(\frac{Y_i}{\sigma_i}\right)^2=Z_i^2\;,
  \]
  and $Z_i^2\sim\chi^2(1)\sim\Gamma(\frac{1}{2},2)$. Hence,
  \[
    Y_i^2\sim\Gamma(\frac{1}{2},2\sigma^2_i)\;.
  \]
  Using Lemma \ref{lemma:LyapunovCondition} with $\alpha=\frac{1}{2}$, $\beta_i=2\sigma_i^2$ and $s_n^2=2\sum\limits_{i=1}^n\sigma_i^4$, we have, \[
    \lim\limits_{n\rightarrow\infty}\frac{1}{s_n^{4}}\sum\limits_{i=1}^n\mathbb{E}\left[|Y_i^2-\sigma_i^2|^{4}\right]=0\;,
  \]
  satisfying Lyapunov's condition for the Lyapunov Central Limit Theorem \citep{Lyapunov1901} for $\delta=2$. Applying Lyapunov's Central Limit Theorem gains the required result.
\end{proof}
\section{Proof of Theorem 1}
\begin{proof}
  We can re-write $X$ as,
  \[X=\sqrt{\sum\limits_{i=1}^p\mu_i^2+2\sum\limits_{i=1}^p\mu_i\sigma_iZ_i+\sum\limits_{i=1}^p\sigma_i^2Z_i^2}\;,
  \]
  where $Z_i$ are independent standard normal random variables. Let $Z^*$ and $Z^{**}$ be dependent standard normal random variables with unknown correlation $\rho$. Using Lemma \ref{lemma:LyapunovCLT} and the binomial expansion, as $p\rightarrow\infty$,
  \begin{align*}
    X\xrightarrow{\mathcal{D}}&\left(\sum\limits_{i=1}^p\left(\mu_i^2+\sigma_i^2\right)+2\sqrt{\sum\limits_{i=1}^p(\mu_i\sigma_i)^2}Z^*+\sqrt{2\sum\limits_{i=1}^p\sigma_i^4}Z^{**}\right)^{\frac{1}{2}}\\
    =&\left(\sum\limits_{i=1}^p\left(\mu_i^2+\sigma_i^2\right)\right)^{\frac{1}{2}}\left(1+\frac{2\sqrt{\sum\limits_{i=1}^p(\mu_i\sigma_i)^2}Z^*+\sqrt{2\sum\limits_{i=1}^p\sigma_i^4}Z^{**}}{\sum\limits_{i=1}^p\left(\mu_i^2+\sigma_i^2\right)}\right)^{\frac{1}{2}}\\
    =&\left(\sum\limits_{i=1}^p\left(\mu_i^2+\sigma_i^2\right)\right)^{\frac{1}{2}}\left(1+\frac{2\sqrt{\sum\limits_{i=1}^p(\mu_i\sigma_i)^2}Z^*+\sqrt{2\sum\limits_{i=1}^p\sigma_i^4}Z^{**}}{2\sum\limits_{i=1}^p\left(\mu_i^2+\sigma_i^2\right)}+\mathcal{O}\left(\frac{1}{p}\right)\right)\\
    =&\left(\sum\limits_{i=1}^p\left(\mu_i^2+\sigma_i^2\right)\right)^{\frac{1}{2}}\\
    &+\sqrt{\frac{2\sum\limits_{i=1}^p\left(\mu_i\sigma_i\right)^2+\sum\limits_{i=1}^p\sigma_i^4+2\rho\left(2\sum\limits_{i=1}^p\sum\limits_{j=1}^p\mu_i^2\sigma_i^2\sigma_j^4\right)^\frac{1}{2}}{2\sum\limits_{i=1}^p\left(\mu_i^2+\sigma_i^2\right)}}Z^{***}+\mathcal{O}\left(\frac{1}{\sqrt{p}}\right)\;,
  \end{align*}
  where $Z^{***}$ is a standard normal random variable, thus giving the required result.
\end{proof}
\section{Dense Mean Changepoints}\label{sec:SimFMMeanchange}
We simulate data with changes in mean that occur in all series for a wide range of $n$ and $p$ and show a subset of the results. We keep the variance vector constant and the total mean change size is,
\begin{equation}
  \sum\limits_{j=1}^p\mu_{j,\text{post}}-\mu_{j,\text{pre}}=\sqrt{p}\Theta\;.\label{eq:meanchangesize}
\end{equation}
We split the total change size evenly across all series and display results with $\Theta=1.2$. Similar findings occur with varying values of $\Theta$. As we assume changepoints are dense, the change size in each series violates the minimum change size assumption in Inspect; we include the results for interest. We apply the GeomCP, Inspect and E-Divisive methods to these scenarios and the true detection rate (TDR) and false detection rate (FDR) are shown in Table \ref{tab:FullyMVMean}. For $p\geq1000$, we are unable to calculate an appropriate stopping threshold for Inspect and, therefore, exclude it from such scenarios.

Table \ref{tab:FullyMVMean} clearly shows GeomCP has a greater TDR than both Inspect and E-Divisive in all scenarios. As $p$ increases, the performance of Inspect and E-Divisive drastically decreases, whereas GeomCP maintains a similar TDR. The FDRs shown in Table \ref{tab:FullyMVMean} indicate GeomCP is either better or competitive with Inspect and E-Divisive. Taking both TDR and FDR into account it is clear that GeomCP outperforms both Inspect and E-Divisive in detecting mean changes that occur in all series.
\begin{table}[]
\caption{TDR and FDR for GeomCP, Inspect and E-Divisive for simulated data sets containing mean changes that occur in all series}
\label{tab:FullyMVMean}
  \centering
\begin{tabular}{|cc|cc|cc|cc|}
\hline
\multicolumn{2}{|l|}{} & \multicolumn{2}{c|}{GeomCP} & \multicolumn{2}{c|}{Inspect} & \multicolumn{2}{c|}{E-Divisive} \\ \hline
n & p & TDR & FDR & TDR & FDR & TDR & FDR \\ \hline
\multirow{5}{*}{\rotatebox[origin=c]{90}{200}} & 50 & \textbf{0.956} & \textbf{0.086} & 0.484 & 0.163 & 0.812 & 0.095 \\
 & 100 & \textbf{0.928} & \textbf{0.101} & 0.304 & 0.150 & 0.614 & 0.159 \\
 & 500 & \textbf{0.952} & 0.080 & 0.018 & \textbf{0.026} & 0.160 & 0.118 \\
 & 1000 & \textbf{0.922} & \textbf{0.088} & - & - & 0.076 & 0.094 \\
 & 2000 & \textbf{0.952} & 0.086 & - & - & 0.046 & \textbf{0.065} \\ \hline
 \multirow{5}{*}{\rotatebox[origin=c]{90}{500}} & 50 & \textbf{0.877} & 0.189 & 0.613 & 0.191 & 0.790 & \textbf{0.119} \\
 & 100 & \textbf{0.913} & 0.168 & 0.470 & 0.247 & 0.689 & \textbf{0.138} \\
 & 500 & \textbf{0.891} & \textbf{0.189} & 0.203 & 0.462 & 0.389 & 0.211 \\
 & 1000 & \textbf{0.891} & \textbf{0.172} & - & - & 0.304 & 0.285 \\
 & 2000 & \textbf{0.896} & \textbf{0.172} & - & - & 0.230 & 0.380 \\ \hline
 \multirow{5}{*}{\rotatebox[origin=c]{90}{1000}} & 50 & \textbf{0.904} & 0.188 & 0.637 & 0.203 & 0.807 & \textbf{0.125} \\
 & 100 & \textbf{0.900} & 0.184 & 0.419 & 0.280 & 0.713 & \textbf{0.154} \\
 & 500 & \textbf{0.900} & \textbf{0.188} & 0.318 & 0.446 & 0.507 & 0.200 \\
 & 1000 & \textbf{0.906} & \textbf{0.184} & - & - & 0.367 & 0.311 \\
 & 2000 & \textbf{0.908} & \textbf{0.175} & - & - & 0.333 & 0.292 \\ \hline
 \multirow{5}{*}{\rotatebox[origin=c]{90}{2000}} & 50 & \textbf{0.880} & 0.200 & 0.634 & 0.213 & 0.747 & \textbf{0.131} \\
 & 100 & \textbf{0.894} & 0.187 & 0.561 & 0.262 & 0.730 & \textbf{0.127} \\
 & 500 & \textbf{0.892} & \textbf{0.190} & 0.343 & 0.392 & 0.463 & 0.284 \\
 & 1000 & \textbf{0.887} & \textbf{0.200} & - & - & 0.447 & 0.253 \\
 & 2000 & \textbf{0.889} & \textbf{0.197} & - & - & 0.360 & 0.281 \\ \hline
\end{tabular}
\end{table}

\section{Dense Mean and Variance Changepoints}\label{sec:FMMeanVarChange}
We simulate data with changes in mean and variance that occur in all series for multiple $n$ and $p$ and show a subset of the results. We set the total mean change size as in \eqref{eq:meanchangesize} and the total variance change size to be,
\begin{equation}
  \prod\limits_{j=1}^p\frac{\sigma_{j,\text{post}}}{\sigma_{j,\text{pre}}}=\Phi^{\sqrt{p}}\;,\label{eq:varchangesize}
\end{equation}
and split the total change sizes evenly across the series. As we have a change in both mean and variance, we would expect a smaller change in both to still be detectable. Here we set $\Theta=1$ and $\Phi=2$ and note similar finding occurred for different $\Theta$ and $\Phi$. We apply the GeomCP and E-Divisive to these data sets and the TDR and FDR are shown in Figure \ref{fig:MeanVarFM}. 

Figure \ref{fig:MeanVarTDR} shows that for all scenarios GeomCP has a better TDR than E-Divisive and the gap between the methods widens as $p$ increases. Additionally, the FDR, shown in Figure \ref{fig:MeanVarFDR}, shows that GeomCP has an improved or competitive FDR with E-Divisive across all scenarios. Considering this, it is clear that for detecting changes that occur in mean and variance simultaneously, GeomCP is to be preferred over E-Divisive.
\begin{figure}[h!]
  \centering
  \subfigure[TDR]{
    \includegraphics[width=0.45\textwidth]{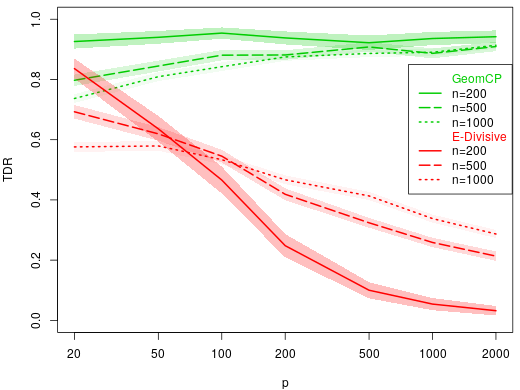}
    \label{fig:MeanVarTDR}
  }
  \subfigure[FDR]{
    \includegraphics[width=0.45\textwidth]{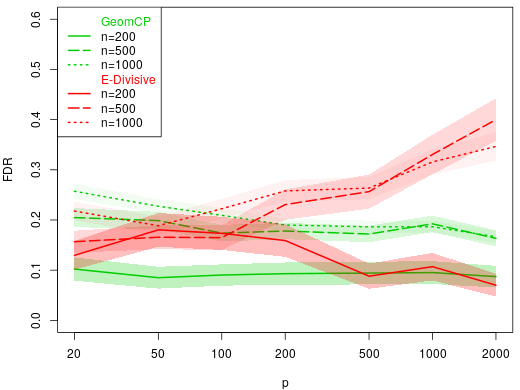}
    \label{fig:MeanVarFDR}
  }
  \caption{(a) TDR and (b) FDR for GeomCP and E-Divisive for simulated data sets containing mean and variance changes that occur in all series for multiple $n$ and $p$}
  \label{fig:MeanVarFM}
\end{figure}
\section{Sparse Variance Changepoints}\label{sec:SimSparsityVar}
To investigate sparse variance changes we set $n=500$, $p=200$ and varied $\kappa$. We keep the mean vector constant and the variance change in each series that undergoes a change, is the total variance change size defined in \eqref{eq:varchangesize}, split between the expected number number of series to undergo a change. We display results with $\Phi=3$ and note similar findings occur with varying values of $\Phi$. We apply the GeomCP and E-Divisive methods to these scenarios and the TDR and FDR are shown in Figure \ref{fig:VarSparsity}.

Figure \ref{fig:VarSparsityTDR} shows that for all levels of sparsity, GeomCP has a far greater TDR than E-Divisive. Figure \ref{fig:VarSparsityFDR} shows that GeomCP has a competitive, if not lower, FDR than E-Divisive across all sparsity levels. This shows that for sparse variance changes GeomCP has an improved performance over E-Divisive.
\begin{figure}[h!]
  \centering
  \subfigure[TDR]{
    \includegraphics[width=0.45\textwidth]{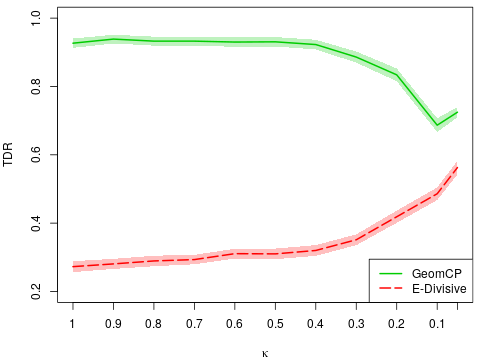}
    \label{fig:VarSparsityTDR}
  }
  \subfigure[FDR]{
    \includegraphics[width=0.45\textwidth]{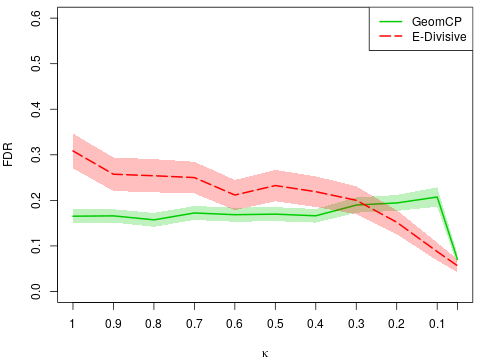}
    \label{fig:VarSparsityFDR}
  }
  \caption{(a) TDR and (b) FDR for GeomCP and E-Divisive for simulated data sets with sparse variance changes with $n=500$ and $p=200$}
  \label{fig:VarSparsity}
\end{figure}

\textcolor{black}{
  \section{Between-series Dependence: Mean Change}
  We now investigate the performance of GeomCP in scenarios where there is underlying covariance structure and a mean change occurs. For these scenarios, we set $n=200$, $p=100$ and have one changepoint at $\tau=100$. The pre-changepoint data will be distributed from a $N(\bm{0},\Sigma)$ while the post-changepoint data distributed from a $N(\boldsymbol\mu,\Sigma)$. We will vary the change size, $\boldsymbol\mu$, while the entries of $\boldsymbol\mu$ will be identical for each change size. We will compare three structures for $\Sigma$:
\begin{enumerate}
\item Independent case: $\Sigma=I$.
\item Block-diagonal case: Here $\Sigma$ will be a block-diagonal matrix with block size of 2. The off-diagonal entries will be randomly sampled from a $U(-0.6,-0.3)\cup U(0.3,0.6)$ distribution with the diagonal entries equal to 1.
\item Random case: Here we let $\Sigma=PDP'$ where $P$ is an orthogonalized matrix of standard Normal random variables and $D$ is a diagonal matrix with entries decreasing from 30 to 1.
\end{enumerate}
As we no longer have independence between series we cannot assume Normality of the distance and angle measures within GeomCP. Hence, we use the empirical cost function \citep{Haynes2017} within PELT to detect changes in the distance and angles measures. We similarly use the empirical cost function in the independent case for comparability.}

\textcolor{black}{Figure \ref{fig:CovMeanTDR} shows GeomCP has a superior TDR over E-divisive for smaller change sizes $\mu$. Interestingly, the TDR for the random covariance structure is poor for both methods. Similarly, to the case of a variance change, Figure \ref{fig:CovMeanFDR} shows by using the empirical cost function within PELT we get a worse FDR for smaller change sizes. However, this trade-off between TDR and FDR could be improved by tuning the penalty used within PELT.}

\begin{figure}[h!]
  \centering
  \subfigure[TDR]{
    \includegraphics[width=0.45\textwidth]{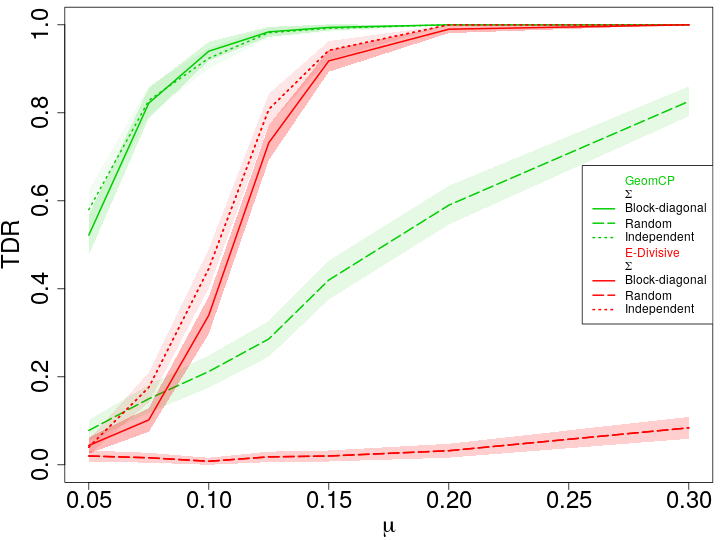}
    \label{fig:CovMeanTDR}
  }
  \subfigure[FDR]{
    \includegraphics[width=0.45\textwidth]{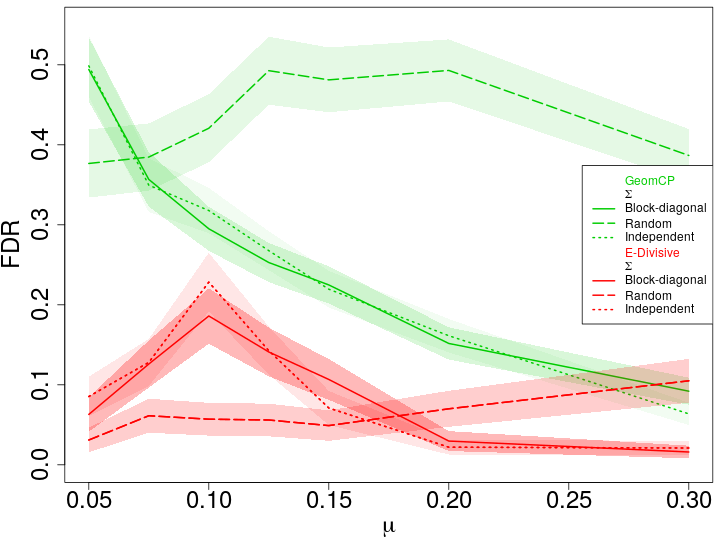}
    \label{fig:CovMeanFDR}
  }
  \caption{TDR and FDR for GeomCP (using the empirical cost function) and E-Divisive for simulated data with an underlying covariance structure and a change in mean for $n=200$ and $p=100$}
  \label{fig:covMean}
\end{figure}

\textcolor{black}{
\section{Performance under the Null}}
\textcolor{black}{Here we investigate the performance of GeomCP on null data. The aim is to keep the number of false positive as low as possible. We simulated Normal data with no changepoints for varying $n$ and $p$ and ran the GeomCP method using the Normal cost function within PELT. We calculated the false positive rate (FPR) by taking the total number of detected changepoints across all replications and dividing by the total number of replications.}

\textcolor{black}{Figure \ref{fig:Null} shows that for the majority of scenarios the FPR stays below 0.05 indicating a conservative performance in terms of the type 1 error.
\begin{figure}[h!]
  \centering
  \includegraphics[width=0.9\textwidth]{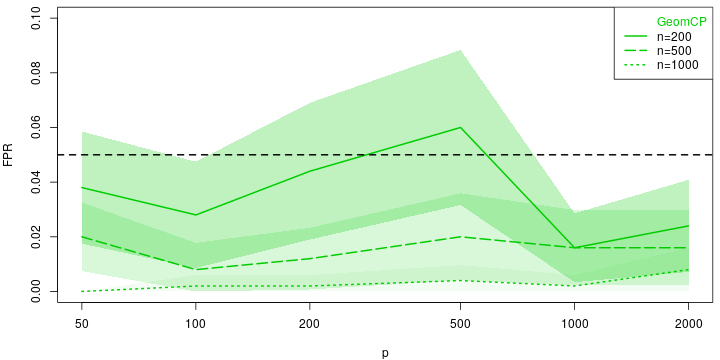}
  \caption{FPR for GeomCP (using the Normal cost function) for simulated data sets with no changepoints for varying $n$ and $p$}
  \label{fig:Null}
\end{figure}}
\textcolor{black}{
\section{$\Theta$ and $\Phi$ Investigation}
Now we investigate how the performance of GeomCP, and the competing methods, vary as we alter $\Theta$ and $\Phi$ for the dense change in mean and change in variance scenarios. Table \ref{tab:ThetaChange} shows the TDR and FDR of GeomCP, Inspect and E-Divisive for multiple $n$ and $p$ for different change size values. GeomCP's performance remains an improvement upon the competing methods for all change sizes. Table \ref{tab:PhiChange} shows similar finding for the change in variance scenarios.}
\begin{table}[h!]
  \caption{TDR and FDR for GeomCP, Inspect and E-Divisive for simulated data sets containing mean changes that occur in all series. $\Theta$ relates to the size of the change.}
  \label{tab:ThetaChange}
  \centering
\begin{tabular}{|ccc|cc|cc|cc|}
\hline
\multicolumn{3}{|l|}{} & \multicolumn{2}{c|}{GeomCP} & \multicolumn{2}{c|}{Inspect} & \multicolumn{2}{c|}{E-Divisive} \\ \hline
n & p & $\Theta$ & TDR & FDR & TDR & FDR & TDR & FDR \\ \hline
\multirow{6}{*}{\rotatebox[origin=c]{90}{200}} & \multirow{3}{*}{\rotatebox[origin=c]{90}{100}} & 1 & \textbf{0.794} & \textbf{0.092} & 0.130 & \textbf{0.092} & 0.384 & 0.181 \\
 &  & 1.2 & \textbf{0.928} & \textbf{0.101} & 0.304 & 0.150 & 0.614 & 0.159 \\
 &  & 1.5 & \textbf{0.988} & \textbf{0.073} & 0.610 & 0.180 & 0.908 & 0.076 \\ \cline{2-9} 
 & \multirow{3}{*}{\rotatebox[origin=c]{90}{500}} & 1 & \textbf{0.758} & 0.144 & 0.010 & \textbf{0.014} & 0.078 & 0.111 \\
 &  & 1.2 & \textbf{0.952} & 0.080 & 0.018 & \textbf{0.026} & 0.160 & 0.118 \\
 &  & 1.5 & \textbf{0.986} & 0.072 & 0.088 & \textbf{0.049} & 0.372 & 0.153 \\ \hline
 \multirow{6}{*}{\rotatebox[origin=c]{90}{1000}} & \multirow{3}{*}{\rotatebox[origin=c]{90}{100}} & 1 & \textbf{0.764} & 0.252 & 0.329 & 0.362 & 0.567 & \textbf{0.184} \\
 &  & 1.2 & \textbf{0.900} & 0.184 & 0.419 & 0.280 & 0.713 & \textbf{0.154} \\
 &  & 1.5 & \textbf{0.983} & 0.146 & 0.581 & 0.167 & 0.820 & \textbf{0.092} \\ \cline{2-9} 
 & \multirow{3}{*}{\rotatebox[origin=c]{90}{500}} & 1 & \textbf{0.751} & \textbf{0.263} & 0.228 & 0.538 & 0.327 & 0.369 \\
 &  & 1.2 & \textbf{0.900} & \textbf{0.188} & 0.318 & 0.446 & 0.507 & 0.200 \\
 &  & 1.5 & \textbf{0.976} & 0.144 & 0.393 & 0.412 & 0.667 & \textbf{0.136} \\ \hline
\end{tabular}
\end{table}

\begin{table}[h!]
  \caption{TDR and FDR for GeomCP and E-Divisive for simulated data sets containing variance changes that occur in all series. $\Phi$ relates to the size of the change.}
  \label{tab:PhiChange}
  \centering
\begin{tabular}{|ccc|cc|cc|}
\hline
\multicolumn{3}{|l|}{} & \multicolumn{2}{c|}{GeomCP} & \multicolumn{2}{c|}{E-Divisive} \\ \hline
n & p & $\Phi$ & TDR & FDR & TDR & FDR \\ \hline
\multirow{6}{*}{\rotatebox[origin=c]{90}{200}} & \multirow{3}{*}{\rotatebox[origin=c]{90}{100}} & 2.5 & \textbf{0.844} & 0.113 & 0.054 & \textbf{0.064} \\
 &  & 3 & \textbf{0.960} & \textbf{0.076} & 0.086 & 0.100 \\
 &  & 3.5 & \textbf{0.966} & \textbf{0.072} & 0.188 & 0.148 \\ \cline{2-7} 
 & \multirow{3}{*}{\rotatebox[origin=c]{90}{500}} & 2.5 & \textbf{0.882} & 0.100 & 0.016 & \textbf{0.070} \\
 &  & 3 & \textbf{0.948} & 0.090 & 0.030 & \textbf{0.063} \\
 &  & 3.5 & \textbf{0.976} & \textbf{0.061} & 0.044 & 0.079 \\ \hline
 \multirow{6}{*}{\rotatebox[origin=c]{90}{1000}} & \multirow{3}{*}{\rotatebox[origin=c]{90}{100}} & 2.5 & \textbf{0.804} & 0.225 & 0.373 & \textbf{0.247} \\
 &  & 3 & \textbf{0.915} & \textbf{0.177} & 0.433 & 0.239 \\
 &  & 3.5 & \textbf{0.962} & 0.166 & 0.500 & \textbf{0.156} \\ \cline{2-7} 
 & \multirow{3}{*}{\rotatebox[origin=c]{90}{500}} & 2.5 & \textbf{0.844} & \textbf{0.204} & 0.173 & 0.506 \\
 &  & 3 & \textbf{0.944} & \textbf{0.169} & 0.280 & 0.331 \\
 &  & 3.5 & \textbf{0.982} & \textbf{0.157} & 0.367 & 0.267 \\ \hline
\end{tabular}
\end{table}

\section{CROPS diagnostics plots}
\textcolor{black}{Here we show the CROPS diagnostic plots used to find the optimal number of changepoints in both our applications, the CGH data set and the S\&P500 data set. The circled points indicates the elbow of the plot and the number of changepoints we used.}
\begin{figure}[h!]
  \centering
  \subfigure[Distance]{
    \includegraphics[width=0.45\textwidth]{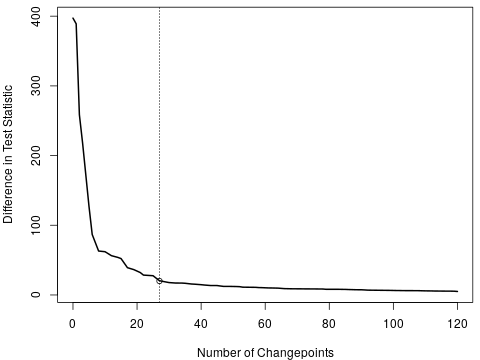}
    \label{fig:GeneticsCROPSDist}
  }
  \subfigure[Angle]{
    \includegraphics[width=0.45\textwidth]{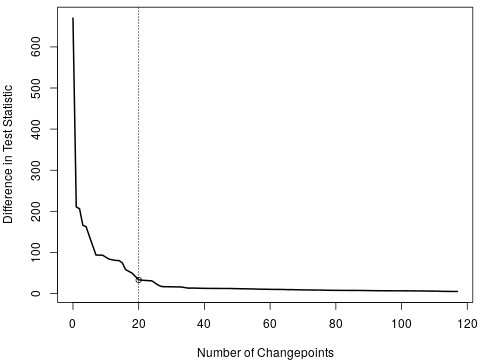}
    \label{fig:GeneticsCROPSAng}
  }
  \caption{CROPS diagnostics plots for distance and angle measure of comparative genomic hybridization data where the vertical line and circled point indicates the elbow of the plot we use for the number of changepoints}
  \label{fig:GeneticsCROPS}
\end{figure}
\begin{figure}[h]
  \centering
  \subfigure[Distance]{
    \includegraphics[width=0.45\textwidth]{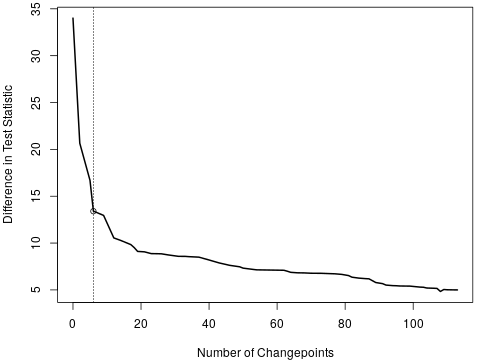}
    \label{fig:SP500CROPSDist}
  }
  \subfigure[Angle]{
    \includegraphics[width=0.45\textwidth]{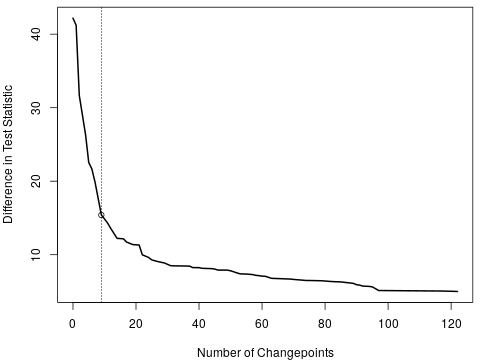}
    \label{fig:SP500CROPSAng}
  }
  \caption{CROPS diagnostics plots for distance and angle measure of S\&P500 log-returns data where the vertical line and circled point indicates the elbow of the plot we use for the number of changepoints}
  \label{fig:SP500CROPS}
\end{figure}
\bibliographystyle{apalike}
\bibliography{reference.bib}